\newcommand{\beq}{\begin{equation}}
\newcommand{\eeq}{\end{equation}}
\newcommand{\bea}{\begin{eqnarray}}
\newcommand{\eea}{\end{eqnarray}}
\newcommand{\bc}{\begin{cases}}
\newcommand{\ec}{\end{cases}}
\newtheorem{theorem}{Theorem}
\newtheorem{definition}[theorem]{Definition}
\newtheorem{example}[theorem]{Example}
\newtheorem{remark}[theorem]{Remark}
\date{\today}
\author{
Jos\'{e} M. Amig\'{o} \\
Centro de Investigaci\'{o}n Operativa, \\
Universidad Miguel Hern\'{a}ndez, \\
Elche, 03202 Alicante, Spain \\
\texttt{jm.amigo@umh.es}
\And
Roberto Dale \\
Centro de Investigaci\'{o}n Operativa, \\
Universidad Miguel Hern\'{a}ndez, \\
Elche, 03202 Alicante, Spain \\
\texttt{rdale@umh.es}
\And
Piergiulio Tempesta \\
Departamento de F\'{\i}sica Te\'{o}rica, \\
Facultad de Ciencias F\'{\i}sicas, \\
Universidad Complutense de Madrid, \\
28040 Madrid, Spain \\
Instituto de Ciencias Matem\'aticas, \\
28049 Madrid, Spain \\
\texttt{p.tempesta@fis.ucm.es, piergiulio.tempesta@icmat.es}
}
\begin{document}

\title{Complexity-based permutation entropies: from deterministic time series to white noise}
\maketitle

\begin{abstract}
This is a paper in the intersection of time series analysis and complexity
theory that presents new results on permutation complexity in general and
permutation entropy in particular. In this context, permutation complexity
refers to the characterization of time series by means of ordinal patterns
(permutations), entropic measures, decay rates of missing ordinal patterns,
and more. Since the inception of this \textquotedblleft
ordinal\textquotedblright\ methodology, its practical application to any
type of scalar time series and real-valued processes have proven to be
simple and useful. However, the theoretical aspects have remained limited to
noiseless deterministic series and dynamical systems, the main obstacle
being the super-exponential growth of allowed permutations with length when
randomness (also in form of observational noise) is present in the data. To
overcome this difficulty, we take a new approach through complexity classes,
which are precisely defined by the growth of allowed permutations with
length, regardless of the deterministic or noisy nature of the data. We
consider three major classes: exponential, sub-factorial and factorial. The
next step is to adapt the concept of Z-entropy to each of those classes,
which we call permutation entropy because it coincides with the conventional
permutation entropy on the exponential class. Z-entropies are a family of
group entropies, each of them extensive on a given complexity class. The
result is a unified approach to the ordinal analysis of deterministic and
random processes, from dynamical systems to white noise, with new concepts
and tools. Numerical simulations show that permutation entropy discriminates
time series from all complexity classes.
\end{abstract}

\keywords{Time series analysis; Deterministic and random real-valued
processes; Metric and topological permutation entropy; Permutation
complexity classes; Permutation entropy rate for noisy processes;
Discrimination of noisy time series; Numerical simulations.}

\tableofcontents

\section{Introduction}

\label{sec:1}

Complexity in symbolic times series, symbols being taken from a finite
alphabet $\mathcal{A}$, has to do with the number of different sequences
(strings, words, blocks,...) of a given length $n$ and how this number
increases with $n$. The perhaps simplest approach consists in counting the
number of such sequences. In this case, the complexity of periodic sequences
is a bounded function of $n$ \cite{Morse1940}, while the complexity of
arbitrary sequences grows as $\left\vert \mathcal{A}\right\vert ^{n}$ ($%
\left\vert \cdot \right\vert $ denotes cardinality). Take the logarithmic
growth rate, namely $\log \left\vert \mathcal{A}\right\vert $, to obtain the
Shannon entropy of a memoryless process that outputs the symbols of $%
\mathcal{A}$ with equal probabilities. The positivity of the entropy
differentiates then exponential from sub-exponential growth. Other
approaches to the concept of complexity of sequences and the processes
producing them have been proposed in different fields. Thus, in information
theory complexity is usually related to compression \cite{Lempel1976,Ziv1978}%
. Here one counts the number of new words arising as one parses the whole
message (ideally, a one-sided infinite binary sequence). In dynamical
systems and symbolic dynamics, the main tool is the dynamic entropy, both in
its metric and topological versions \cite{Amigo2015}. In computer science,
algorithmic (or Kolmogorov) complexity refers to the shortest computer code
that generates the sequence at hand, while computational problems are
grouped into (polynomial, exponential,...) complexity classes according to
how the amount of resources (time, memory,...) needed to solve them using a
computation model (Turing machine, probabilistic Turing machine, quantum
computer,...) depends on the \textquotedblleft size\textquotedblright\ of
the input\ (usually, the number of bits) \cite{Li2008}. In number theory and
cryptography there are also several proposals, some of them going deep into
the concepts of randomness, compressibility and typicality \cite%
{Volchan2002,Shen2017,Downey2019}.

This paper deals with the concept of permutation complexity of real-valued
time series introduced in \cite{Amigo2010B,Amigo2010,Monetti2013}, so our
symbols will be ordinal patterns or permutations of length $L$ \cite%
{Bandt2002}. As we will see more precisely in the following sections, the
count of permutations grows exponentially with $L$ in the case of
(noiseless) deterministic signals, while it grows super-exponentially for
noisy deterministic and random signals, sometimes called \textit{noisy
signals} hereafter for brevity. This different growth behavior of the
ordinal patterns and, therefore, of the permutation complexity makes
possible to distinguish deterministic signals from noisy signals but, at the
same time, it poses a challenge for a unified quantification of permutation
complexity for a simple reason: the usual tools for measuring complexity
(say, Shannon and Kolmogorov-Sinai entropies) are designed for exponential
growths of the symbols they are defined upon, thus diverging when applied to
super-exponential growths. This occurs, in particular, with permutation
entropy, which is the Shannon entropy of a time series in its \textit{%
ordinal representation}, i.e., its symbolic representation via ordinal
patterns.

As a result, the tools of permutation complexity are applied to time series
analysis in different ways. In theoretical applications, where the time
series are deterministic and may be assumed to be infinitely long, one uses
entropic measures such as metric and topological permutation entropy, or the
like. In practical applications, where the time series are noisy and finite,
one typically resorts to permutations entropies of finite order (Section \ref%
{sec2.1}), causality-complexity planes \cite{Rosso2007,Zunino2012}, the
decay rate of the missing patterns \cite{Amigo2007,Carpi2010}, and ordinal
networks \cite{Pessa2019}, to mention some typical techniques. This being
the case, the objective of the present paper is to propose an integrating
and overarching approach as follows.

The main character of this new approach is the logarithmic growth of allowed
(or visible) ordinal patterns with increasing length. Depending on that
growth, processes are collected in the exponential, sub-factorial and
factorial complexity classes, whether they are deterministic or random. This
procedure was inspired by similar ideas in complexity theory, where systems
are usually classified according to the state growth rates of the states
with the number of constituents $N$. For each of those classes there is a
particular \textit{group entropy}, called $Z$-entropy, that is \textit{%
extensive} for the systems in the class, meaning that it is finite over the
uniform probability distributions in the limit $N\rightarrow \infty $ \cite%
{TJ2020,PT2020}. In our context, system translates into process, the
extensive parameter $N$ into the length $L$ of the ordinal patterns, $Z$%
-entropy into (generalized) permutation entropy, and extensivity into the
convergence of the corresponding topological permutation entropy rate.
Nevertheless, the introduction of the key concepts will be self-contained
and will concentrate on time series and processes, so that the reader can
understand their rationale and properties without further reference. The
result is a characterization of time series in the ordinal representation
that focuses on complexity rather than data generation. This way we extend
the realm of the standard permutation entropy from deterministic processes
(dynamical systems) to random processes, thus filling a conceptual gap in
permutation complexity. A first step in this direction was taken in \cite%
{Amigo2021}, where we used the $Z$-entropy of the factorial complexity class
to define a generalized permutation entropy for noisy signals without
forbidden patterns, i.e., noisy dynamics and random processes such that all
ordinal patterns of any length are allowed; these are the kind of signals
encountered in practice. Our approach here is more general and comprehensive.

Regarding the notion of group entropy mentioned above, it was introduced in 
\cite{PT2011PRE} and discussed, e.g.,\negthinspace\ in \cite%
{PT2016PRA,JT2018ENT,JPPT2018JPA,RRT2019PRA,TJ2020,PT2020}. Essentially, a
group entropy is a functional defined on a probability space which satisfies
several important properties, such as the first three Shannon-Khinchin
axioms (Section \ref{sec4}) and a so-called composability axiom: the entropy
of a system compound by two statistically independent systems is expressed
by a formal group law \cite{Amigo2021}. By construction, group entropies
have a direct interpretation as information measures \cite{RRT2019PRA,TJ2020}%
. In particular, they can be used to define divergences and Riemannian
structures over statistical manifolds.

The ordinal approach, where the information contained in the ordinal
patterns is exploited via probability distributions, entropies, etc., is
quite popular in time series analysis for a number of reasons, including its%
\textbf{\ }computational simplicity and speed. Applications to biomedicine
where among the first and include epilepsy \cite{Keller2004}, cardiopathies 
\cite{Parlitz2012}, heart rate variability \cite{Graff2013}, and more \cite%
{Amigo2015B}. Further applications include dynamical change detection \cite%
{Cao2004}, signal characterization \cite%
{Carpi2010,Olivares2019A,Olivares2019B}, and image processing \cite%
{Zunino2016,Chagas2021}. Currently, ordinal techniques, alone or
complemented by other methods, are being applied in plenty of fields, e.g.,
chaotic dynamics, earth science, computational neuroscience, and
econophysics; see \cite{Amigo2013} for examples, and \cite{Zanin2012} for a
recent survey.

The rest of this paper is organized as follows. Section \ref{sec2} contains
the mathematical setting for the subsequent discussion, in particular,
metric and topological permutation entropies as well as the concepts of
allowed and forbidden patterns. In doing so, we cover the full range of
discrete-time, real-valued time series envisaged in this paper, namely:
noiseless deterministic, noisy deterministic, and random signals. This
section is partially based on our paper \cite{Amigo2021}. Section \ref{sec3}
is devoted to the permutation complexity function and classes. Here we
introduce the exponential, sub-factorial and factorial permutation
complexity classes that are further analyzed in the subsequent sections. In
Section \ref{sec4} we briefly review the general concept of entropy (based
on the Shannon-Khinchin axioms), before extending permutation entropy from
the exponential class to the factorial and sub-factorial classes. Numerical
simulations is the subject of Section \ref{sec5}. In this section, the
discriminatory power of the permutation entropy (Section \ref{sec51}) and
the permutation complexity function (Section \ref{sec52}) is put to the test
with a battery of seven noisy processes from the factorial class. In Section %
\ref{sec53} we study numerically and analytically a toy model for
sub-factorial processes. The conclusions are summarized in Section \ref{sec6}%

\section{Permutation complexity}

\label{sec2}

Real-valued time series typically result from sampling analog signals or
observing dynamical flows at discrete times. A further step in the analysis
of such series can be the discretization of the data, a procedure that is
usually called symbolic representation. The information provided by a
symbolic representation may be sufficient for the intended application while
simplifying the mathematical tools needed for the analysis. In this regard,
ordinal patterns \cite{Bandt2002} are becoming increasingly popular to
represent symbolically real-valued time series. Some reasons for this is
their mathematically sound relation to Kolmogorov-Sinai entropy via
permutation entropy \cite{Bandt2002B,Keller2010,Amigo2012,Keller2019} and
their ease of computation. Ordinal patterns and permutation entropies are
the main ingredients of permutation complexity.

\subsection{Ordinal representations and permutation entropy}

\label{sec2.1}

Given a (finite or infinite) time series $(x_{t})_{t\geq
0}=x_{0},x_{1},\ldots ,x_{t},...$, where $t=0,1,...,N\leq \infty $ is
discrete time and $x_{t}\in \mathbb{R}$, its symbolic representation by 
\textit{ordinal patterns of length} $L\geq 2$ is $\mathbf{r}_{0},\mathbf{r}%
_{1},\ldots ,\mathbf{r}_{t},\ldots $, where $\mathbf{r}_{t}$ is the rank
vector of the string $x_{t}^{L}:=x_{t},x_{t+1},\ldots ,x_{t+L-1}$ ($t\leq
N-L+1$), i.e., $\mathbf{r}_{t}=(\rho _{0},\rho _{1},\ldots ,\rho _{L-1})$
where $\{\rho _{0},\rho _{1},\ldots ,\rho _{L-1}\}\in \{0,1,\ldots ,L-1\}$
are such that%
\begin{equation}
x_{t+\rho _{0}}<x_{t+\rho _{1}}<\ldots <x_{t+\rho _{L-1}}  \label{ord patt}
\end{equation}%
(other rules can also be found in the literature). In case of two or more
ties, one can adopt some convention, e.g., the earlier entry is smaller.
Sometimes we say that $x_{t}^{L}$ defines the \textit{ordinal }$L$\textit{%
-pattern} $\mathbf{r}_{t}$ or that it is of \textit{type} $\mathbf{r}_{t}$.
Ordinal $L$-patterns can be identified with permutations of $\{0,1,\ldots
,L-1\}$, i.e., with elements of the symmetric group of degree $L$, $\mathcal{%
S}_{L}$; the cardinality of $\mathcal{S}_{L}$, $\left\vert \mathcal{S}%
_{L}\right\vert $, is $L!$. Symbolic representations of time series by means
of ordinal patterns are called \textit{ordinal representations}. The
algebraic structure of $\mathcal{S}_{L}$ was exploited in \cite{Monetti2013}%
, which led to the more general concept of algebraic representations.

Furthermore, the time series $(x_{t})_{t\geq 0}$ is assumed to be output by
a discrete-time deterministic or random process $\mathbf{X}$ taking values
on an interval $I\subset \mathbb{R}$. By deterministic process we mean a one
dimensional dynamical system $(I,\mathcal{B},\mu ,f)$, where $I$ (the state
space) is a bounded interval of $\mathbb{R}$, $\mathcal{B}$ is the Borel $%
\sigma $-algebra of $I$, $\mu $ is a measure over the measurable space $(I,%
\mathcal{B})$ such that $\mu (I)=1$ (i.e., $(I,\mathcal{B},\mu )$ is a
probability space) and, for the time being, $f:I\rightarrow I$ is any $\mu $%
-invariant map (i.e., $\mu (f^{-1}(B))=\mu (B)$ for all $B\in \mathcal{B}$);
alternatively, we say that $\mu $ is $f$-invariant. In this case, the output 
$(x_{t})_{t\geq 0}$ of $\mathbf{X}$ is the orbit of $x_{0}$, i.e., $%
(x_{t})_{t\geq 0}=(f^{t}(x_{0}))_{t\geq 0}$, where $f^{0}(x_{0})=x_{0}\in I$
and $f^{t}(x_{0})=f(f^{t-1}(x_{0}))$. An ordinal representation of the
orbits of $f$ by ordinal $L$-patterns partitions the state space $I$ into
the $L!$ bins%
\begin{equation}
P_{\mathbf{r}}=\{x\in I:\text{ }(x,\,f(x),...,\,f^{L-1}(x))\text{ is of type 
}\mathbf{r}\in \mathcal{S}_{L}\}.  \label{P_r}
\end{equation}%
Therefore, the probability $p(\mathbf{r})$ of the ordinal pattern $\mathbf{r}%
\in \mathcal{S}_{L}$ to occur in an output of the deterministic process $%
\mathbf{X}$ generated by the map $f$ is 
\begin{equation}
p(\mathbf{r})=\mu (P_{\mathbf{r}}).  \label{p(r)}
\end{equation}%
Note that, although the outputs $(x_{t})_{t\geq 0}$ are deterministic
(\textquotedblleft sharp\textquotedblright\ orbits), their ordinal
representations $(\mathbf{r}_{t})_{t\geq 0}$ are random sequences
(\textquotedblleft pixelated\textquotedblright\ orbits), as occurs with any
symbolic dynamics of a map with respect to a partition of its state space 
\cite{Amigo2010B}.

The \textit{metric permutation entropy} (\textit{rate}) of the process $%
\mathbf{X}$ is defined as%
\begin{equation}
h^{\ast }(\mathbf{X})=\underset{L\rightarrow \infty }{\lim \sup }\frac{1}{L}%
H^{\ast }(X_{0}^{L}),  \label{h*_mu}
\end{equation}%
where $X_{0}^{L}=X_{0},X_{1},...,X_{L-1}$ and 
\begin{equation}
H^{\ast }(X_{0}^{L})=-\sum_{\mathbf{r}\in \mathcal{S}_{L}}p(\mathbf{r})\ln p(%
\mathbf{r})  \label{h*_mu,L}
\end{equation}%
is the \textit{metric permutation entropy of\ }$\mathbf{X}$ \textit{of order}
$L$.

In other words, $H^{\ast }(X_{0}^{L})$ is the Shannon entropy of the
probability distribution $\{p(\mathbf{r}):\mathbf{r}\in \mathcal{S}_{L}\}$.
If $\mathbf{X}$ is a deterministic process (and $\mu $ is known), then $p(%
\mathbf{r})$ is given as in Equations (\ref{P_r})-(\ref{p(r)}). If $\mathbf{X%
}$ is a random process, the probabilities $p(\mathbf{r})$ can only
exceptionally be derived from the probability distributions of $\mathbf{X}$ 
\cite{Bandt2007} so, in general, they have to be estimated, e.g. by relative
frequencies:%
\begin{equation}
\hat{p}(\mathbf{r})=\frac{\left\vert \{x_{t}^{L}\text{ of type }\mathbf{r}%
\in \mathcal{S}_{L}:0\leq t\leq N-L+1\}\right\vert }{N-L+2}.  \label{nu(r)}
\end{equation}%
In the theoretical case of an infinite time series, take the limit $%
N\rightarrow \infty $ in (\ref{nu(r)}). In nonlinear time series analysis,
the ergodic invariant measure defined by $\mu (P_{\mathbf{r}})=\hat{p}(%
\mathbf{r})$ is called the physical or natural measure because it is the
only relevant measure for physical systems and numerical simulations \cite%
{Eckmann1985}. More about this in Section \ref{sec51}.

\begin{remark}
\label{RemarkStationarity}The limit $\lim_{N\rightarrow \infty }\hat{p}(%
\mathbf{r})$ exists with probability $1$ when the underlying stochastic
process fulfills the following weak stationarity condition: for $k\leq L-1,$
the probability for $x_{t}<x_{t+k}$ should not depend on $t$ \cite{Bandt2002}%
. This is the case for stationary processes but also for non-stationary
processes with stationary increments such as the fractional Brownian motion 
\cite{Mandelbrot1968} and its increments, that is, the fractional Gaussian
noise. We will use these random processes, which have long range
dependencies, in the numerical simulations.
\end{remark}

Let $\mathbf{X}$ be a deterministic or random process that takes values on
an interval $I\subset \mathbb{R}$. We say that an ordinal pattern $\mathbf{r}%
\in \mathcal{S}_{L}$ is \textit{allowed} for $\mathbf{X}$ if the probability
that a string $x_{t}^{L}$ of type $\mathbf{r}$ is output by $\mathbf{X}$ is
positive. That is, an $L$-pattern $\mathbf{r}$\ is allowed if there are
strings $x_{t},...,x_{t+L-1}$\ in some outputs or orbits of $\mathbf{X}$\
such that the type of those strings is $\mathbf{r}$. Otherwise, the ordinal $%
L$-pattern $\mathbf{r}$ is \textit{forbidden} for $\mathbf{X}$. For example,
the ordinal $3$-pattern $\mathbf{r}=(2,1,0)$ is forbidden for the logistic
map $f(x)=4x(1-x)$, $0\leq x\leq 1$, because there is no string $%
x_{t},x_{t+1},x_{t+2}$ in any orbit of $f$ such that $x_{t+2}<x_{t+1}<x_{t}$%
; all other $3$-patterns $\mathbf{r}=(\rho _{0},\rho _{1},\rho _{2})$, where 
$\rho _{0},\rho _{1},\rho _{2}\in \{0,1,2\}$ and $\mathbf{r}\neq (2,1,0)$,
are allowed for the logistic map, that is, $x_{t+\rho _{0}}<x_{t+\rho
_{1}}<x_{t+\rho _{2}}$ for $x_{t}$ in a suitable subinterval of $[0,1]$, see 
\cite{Amigo2010B}. Since we do not consider patterns other than ordinal
patterns in this paper, we speak of allowed and forbidden patterns for
brevity.

If $\mathcal{A}_{L}(\mathbf{X})$ denotes the number of allowed patterns of
length $L$ for $\mathbf{X}$, the \textit{topological permutation entropy} (%
\textit{rate}) of the process $\mathbf{X}$ is then defined as%
\begin{equation}
h_{0}^{\ast }(\mathbf{X})=\underset{L\rightarrow \infty }{\lim \sup }\frac{1%
}{L}H_{0}^{\ast }(X_{0}^{L}),  \label{h*_0}
\end{equation}%
where%
\begin{equation}
\;H_{0}^{\ast }(X_{0}^{L})=\ln \mathcal{A}_{L}(\mathbf{X})  \label{h*_0,L}
\end{equation}%
is the \textit{topological permutation entropy of\ }$\mathbf{X}$ \textit{of
order} $L$. Moreover,%
\begin{equation}
H^{\ast }(X_{0}^{L})\leq H_{0}^{\ast }(X_{0}^{L})\leq \ln L!,  \label{bounds}
\end{equation}%
where $H^{\ast }(X_{0}^{L})=H_{0}^{\ast }(X_{0}^{L})$ for flat probability
distributions of the allowed $L$-patterns, and $H_{0}^{\ast }(X_{0}^{L})=\ln
L!$ if all $L$-patterns are allowed.

\subsection{Allowed pattern growths for deterministic and random processes}

\label{sec2.2}

The map $f:I\rightarrow I$ is called \textit{piecewise monotone} if there is
a finite partition of $I$ such that $f$ is continuous and strictly monotone
on each subinterval of the partition. If the graph of $f$ has $n$ humps,
then $f$ is called unimodal $(n=1)$ or multimodal ($n>1$). Most
one-dimensional maps encountered in practice are piecewise monotone, so this
condition does not imply any strong restriction for practical purposes. Let $%
h_{0}(f)$ denote the topological entropy of $f$, and $h(f)$ its metric (or
Kolmogorov-Sinai) entropy \cite{Walter2000}. The following theorem holds 
\cite{Bandt2002B}.

\begin{theorem}
\label{Thm1}If $f$ is piecewise monotone, then (a) $h^{\ast }(f)=h(f)$, and
(b) $h_{0}^{\ast }(f)=h_{0}(f)$.
\end{theorem}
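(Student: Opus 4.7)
The plan is to bridge the ordinal partition $\mathcal{Q}_L=\{P_{\mathbf{r}}:\mathbf{r}\in\mathcal{S}_L,\ P_{\mathbf{r}}\neq\emptyset\}$ with the iterated monotonicity partition of $f$, and then invoke the classical Misiurewicz–Szlenk theorem to identify the growth rates. Let $\mathcal{P}_1=\{J_1,\ldots,J_k\}$ be the partition of $I$ into maximal intervals of monotonicity of $f$, and set $\mathcal{P}^f_n=\bigvee_{i=0}^{n-1}f^{-i}\mathcal{P}_1$. The non-empty atoms of $\mathcal{P}^f_n$ are precisely the maximal subintervals of $I$ on which $f,f^2,\ldots,f^n$ are all continuous and strictly monotone; denote their number by $c(n,f)$. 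For piecewise monotone interval maps, Misiurewicz–Szlenk gives $\tfrac{1}{n}\log c(n,f)\to h_0(f)$, and $\mathcal{P}_1$ is a generator, so $\tfrac{1}{n}H_\mu(\mathcal{P}^f_n)\to h(f,\mathcal{P}_1)=h(f)$.

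The core of the proof is a combinatorial lemma relating $\mathcal{Q}_L$ and $\mathcal{P}^f_{L-1}$: there is a polynomial $Q(L)$ such that, up to $\mu$-null sets, each atom of $\mathcal{P}^f_{L-1}$ meets at most $Q(L)$ atoms of $\mathcal{Q}_L$, and each atom of $\mathcal{Q}_L$ meets at most $Q(L)$ atoms of $\mathcal{P}^f_{L-1}$. The first direction is the easier one: fix a non-empty atom $J$ of $\mathcal{P}^f_{L-1}$; by construction each $f^i\vert_J$ ($0\le i\le L-1$) is monotone and continuous, so the rank vector of $(x,f(x),\ldots,f^{L-1}(x))$ can only change at the (finitely many) points of $J$ where $f^i(x)=f^j(x)$ for some pair $0\le i<j\le L-1$; since there are $\binom{L}{2}$ such pairs and each contributes a bounded number of sign changes (using the piecewise monotonic structure of the iterates), $J$ splits into at most polynomially many subintervals of constant ordinal pattern. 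The reverse inclusion, showing that distinct monotonicity itineraries generically produce distinct ordinal patterns up to a polynomial over-count, is the main obstacle and requires tracking how the arrangement of hyperplanes $\{x_i=x_j\}$ interacts with the curve $x\mapsto(x,f(x),\ldots,f^{L-1}(x))$.

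Granted this lemma, part (b) follows immediately. We have $\mathcal{A}_L(f)=|\mathcal{Q}_L|$ and $c(L-1,f)=|\mathcal{P}^f_{L-1}|$, and the lemma gives
\begin{equation}
\frac{1}{L}\log c(L-1,f)-\frac{\log Q(L)}{L}\le \frac{1}{L}\log \mathcal{A}_L(f)\le \frac{1}{L}\log c(L-1,f)+\frac{\log Q(L)}{L}.
\end{equation}
Letting $L\to\infty$ and applying Misiurewicz–Szlenk gives $h_0^{\ast}(f)=h_0(f)$.

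Part (a) is handled analogously at the level of Shannon entropies. Using the chain rule
\begin{equation}
H_\mu(\mathcal{Q}_L\vee\mathcal{P}^f_{L-1})=H_\mu(\mathcal{P}^f_{L-1})+H_\mu(\mathcal{Q}_L\mid\mathcal{P}^f_{L-1})
\end{equation}
together with the coarse bound $H_\mu(\mathcal{Q}_L\mid\mathcal{P}^f_{L-1})\le\log Q(L)$, and the symmetric bound with the roles of the two partitions exchanged, yields $|H_\mu(\mathcal{Q}_L)-H_\mu(\mathcal{P}^f_{L-1})|=O(\log L)$. Dividing by $L$, taking $\limsup$, and using $\tfrac{1}{L}H_\mu(\mathcal{P}^f_L)\to h(f)$ shows $h^{\ast}(f)=h(f,\mathcal{P}_1)=h(f)$. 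The hard part throughout is the combinatorial lemma; everything else is a translation between partition growth rates and entropies via standard inequalities.
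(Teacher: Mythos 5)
First, note that the paper does not prove Theorem \ref{Thm1} at all: it is imported verbatim from Bandt, Keller and Pompe \cite{Bandt2002B}, so there is no internal proof to compare against. Your strategy --- relating the ordinal partition $\mathcal{Q}_L$ to the iterated monotonicity partition $\mathcal{P}^f_{L-1}$ and invoking Misiurewicz--Szlenk --- is indeed the strategy of that reference, but what you have written is a plan rather than a proof, and the one step you do argue is flawed. You claim that on an atom $J$ of $\mathcal{P}^f_{L-1}$ the ordinal pattern can change only at finitely many points because each pair $(i,j)$ ``contributes a bounded number of sign changes'' of $f^i-f^j$. For a map that is merely piecewise monotone (continuous and strictly monotone on each piece, with no smoothness assumed), $f^i$ and $f^j$ restricted to $J$ are just two monotone functions; their difference is only of bounded variation and can vanish on an infinite set, so the number of sign changes is not bounded. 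Worse, the conclusion you want cannot follow from monotonicity alone: $n$ strictly increasing continuous functions on an interval can realize all $n!$ orderings (arrange their pairwise crossings along a Steinhaus--Johnson--Trotter sequence of adjacent transpositions), so any polynomial bound on the number of distinct patterns per monotonicity atom must exploit the dynamical relation $f^{j}=f^{j-i}\circ f^{i}$, which your sketch never uses. The reverse direction of your lemma you explicitly leave open (``the main obstacle''), so the core of the argument is missing in both directions.

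Two further points. The assertion that the monotonicity partition $\mathcal{P}_1$ ``is a generator'' is false in general for piecewise monotone maps (consider a map with an attracting periodic orbit: all iterates are monotone on a neighbourhood inside the basin, so the refined atoms there never shrink); the identity $h_\mu(f,\mathcal{P}_1)=h_\mu(f)$ does hold for piecewise monotone maps, but it is a theorem requiring its own argument, not a one-line consequence of a generator property. On the positive side, your bookkeeping is correct where it is standard: bounding $\lvert H_\mu(\mathcal{Q}_L)-H_\mu(\mathcal{P}^f_{L-1})\rvert$ by $\log Q(L)$ via conditional entropies is the right way to transfer a comparison lemma to the metric statement, and the squeeze for part (b) is fine. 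So the architecture is sound and matches the cited source; what is missing is precisely the mathematical content, namely a correct proof of the two-sided comparison between $\mathcal{Q}_L$ and $\mathcal{P}^f_{L-1}$, which is exactly where the dynamics must enter.
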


Theorem \ref{Thm1}(a) was generalized to countably piecewise monotone maps
in \cite{Keller2019}. Generalizations to higher dimensional intervals can be
found in \cite{AmigoKeller2013}.

From Theorem \ref{Thm1}(b) and Equations (\ref{h*_0})-(\ref{h*_0,L}) it
follows that 
\begin{equation}
\ln \mathcal{A}_{L}(\mathbf{X})=\ln \left\vert \{\text{allowed }L\text{%
-patterns for deterministic }\mathbf{X}\}\right\vert \sim h_{0}(f)L,
\label{allowed pat f}
\end{equation}%
where $f$ is the map generating the outputs of $\mathbf{X}$, the symbol $%
\sim $ stands for \textquotedblleft asymptotically when $L\rightarrow \infty 
$\textquotedblright\ (i.e., $\lim_{L\rightarrow \infty }\ln \mathcal{A}_{L}(%
\mathbf{X})/(h_{0}(f)L)=1$) and, for the sake of this paper, we assume $%
h_{0}(f)>0$ throughout. Therefore, the number of allowed $L$-patterns for a
piecewise monotone map $f$ grows exponentially with $L$. To be more precise,
according to the proof of Proposition 1(b) in \cite{Bandt2002B}, $\mathcal{A}%
_{L}(\mathbf{X})\sim \exp [h_{0}(f)L+\ln L+const]$.

\begin{remark}
\label{Remarkiff}More generally, it is easy to show that $\ln \phi (L)\sim
g(L)$ if and only if $\phi (L)=\exp [g(L)+o(g(L)]$, where $o(g(L))$ denotes
a function such that $o(g(L))/g(L)\rightarrow 0$ when $L\rightarrow \infty $%
. Exponential growth of $\phi (L)$ with respect to $L$ corresponds to $g(L)$
linear, as in Equation (\ref{allowed pat f}). In that particular case: $%
\mathbf{X}=f$, $\phi (L)=\mathcal{A}_{L}(\mathbf{X})$, $g(L)=h_{0}(f)L$ and $%
o(g(L))=\ln L+const=o(L)$.
\end{remark}

Since, on the other hand, the number of possible $L$-patterns is $L!$ and 
\begin{equation}
\ln L!\sim L\ln L  \label{Stirling}
\end{equation}%
by Stirling's formula $\ln L!\simeq L(\ln L-1)+\tfrac{1}{2}\ln (2\pi L)$, we
conclude from Equation (\ref{allowed pat f}) that deterministic processes
necessarily have forbidden $L$-patterns for $L$ large enough and, in fact,
the number of forbidden $L$-patterns grows super-exponentially with $L$. By 
\textit{deterministic process} we mean here and hereafter the dynamics
generated by (the iteration of) a piecewise monotone map so that Theorem \ref%
{Thm1} is applicable and Equation (\ref{allowed pat f}) holds with $%
h_{0}(f)>0$. Sometimes we write $\mathbf{X}=f$ in this case.

As mentioned before, $\mathbf{r}=(2,1,0)$ is the only forbidden $3$-pattern
for the logistic parabola, while all $3$-patterns are allowed for the shift
map $x\mapsto 2x~\mathrm{{mod}~1}$ \cite{Amigo2010B}. The respective number
of forbidden $4$-patterns is 12 and 6 \cite{Amigo2010B}. Each forbidden
pattern of length $L_{0}$ in a deterministic dynamic is the seed of an
infinitely long trail of \textquotedblleft outgrowth forbidden
patterns\textquotedblright\ of lengths $L>L_{0}$ whose structure can be
found in \cite{Amigo2010B}. Let us mention in passing that forbidden
patterns there exist also in higher dimensional dynamics (at least) for
expansive maps, ordinal patterns being defined lexicographically \cite%
{Amigo2008B}. Therefore, projections of higher dimensional dynamics are
expected to have forbidden patterns and exponential growths of allowed
patterns as well.

At the other extreme are random processes without forbidden patterns, that
is, processes for which all ordinal patterns of any length are allowed and,
hence, their growth is factorial: $\mathcal{A}_{L}(\mathbf{X})=\left\vert 
\mathcal{S}_{L}\right\vert =L!$. A trivial example of a random process
without forbidden patterns is white noise.

Also noisy deterministic time series may not have forbidden patterns (for
sufficiently long series). Indeed, when the dynamics takes place on a
nontrivial attractor so that the orbits are dense, then the observational
(white) noise will \textquotedblleft destroy\textquotedblright\ all
forbidden patterns in the long run, no matter how small the noise. For this
reason, we sometimes call noisy deterministic processes and other random
processes without forbidden patterns just \textit{forbidden-pattern-free }%
(FPF) \textit{processes} or signals. Unlike Equation (\ref{allowed pat f})
for deterministic processes, for FPF processes we have 
\begin{equation}
\ln \mathcal{A}_{L}(\mathbf{X})=\ln \left\vert \{\text{allowed }L\text{%
-patterns for FPF }\mathbf{X}\}\right\vert =\ln L!\sim L\ln L,
\label{allowed pat X}
\end{equation}%
where we used the asymptotic equivalence (\ref{Stirling}).

To complete the picture, let us point out that random processes can have
forbidden patterns too. A conceptually simple (though impractical) way of
constructing such a process is to repeatedly draw $x_{t}$ until the type of
the block $x_{0},x_{1},...,x_{t}$ is allowed, for $t=1,2,...$ By controlling
the number of allowed $L$-patterns, this constrained random process outputs
time series with any feasible growth of allowed $L$-patterns, in particular,
an exponential one (as in the deterministic case). A more realistic example
of a random process with a sub-factorial growth of allowed pattern is the
following.

\begin{example}
\label{ExampleSubfact}\emph{(Not-so-noisy measurement of a periodic signal)}%
. Suppose that a periodic time series $(y_{t})_{t\geq
0}=(f^{t}(y_{0}))_{t\geq 0}$ of prime period $\mathfrak{p}\geq 2$ is
observed; so%
\begin{equation*}
y_{t}=f^{k}(y_{0})=y_{k}
\end{equation*}%
for every $t=k~\mathrm{{mod}~}\mathfrak{p}$, $k=0,1,...,\mathfrak{p}-1$,
where $y_{0}<y_{1}<...<y_{\mathfrak{p}-1}$ for simplicity. Furthermore,
suppose that the points $y_{k}$ are measured with a device whose precision
is value dependent, so that only the measurement of, say, $y_{\mathfrak{p}%
-1} $ is noiseless and, otherwise, the uncertainty intervals of $%
y_{0},...,y_{\mathfrak{p}-2}$ do not overlap. To model this situation, let $%
\delta >0$ be the minimum separation between the points of the periodic
cycle $(y_{0},y_{1},...,y_{\mathfrak{p}-1})$ and add white noise to $y_{t}$
with amplitude less than $\delta /2$, except when $t=\mathfrak{p}-1~\mathrm{{%
mod}~}\mathfrak{p}$. That is, the noisy observations are $x_{t}=y_{t}+\zeta
_{t}$ for $t=0,1,...,\mathfrak{p}-2~\mathrm{{mod}~}\mathfrak{p}$, where $%
\zeta _{t} $ are independent and uniformly distributed random variables in $%
(-\delta /2,\delta /2)$, and $x_{t}=y_{t}$ for $t=\mathfrak{p}-1~\mathrm{{mod%
}~}\mathfrak{p}$, so that $x_{\nu \mathfrak{p}}<x_{\nu \mathfrak{p}%
+1}<...<x_{(\nu +1)\mathfrak{p}-1}=y_{\mathfrak{p}-1}$ for all $\nu \in 
\mathbb{N}$. Choose $L=\nu \mathfrak{p}$ for simplicity. Then the number of
allowed $L$-patterns is given by%
\begin{equation}
\mathcal{A}_{L}(\mathbf{X}_{\mathfrak{p}})=\mathfrak{p}\left( \nu !\right) ^{%
\mathfrak{p}-1}=\mathfrak{p}[(L/\mathfrak{p})!]^{\mathfrak{p}-1},
\label{formula}
\end{equation}%
where $\mathbf{X}_{\mathfrak{p}}$ is the noisy process that outputs the time
series $(x_{t})_{t\geq 0}$. The factor $\mathfrak{p}$ in Equation (\ref%
{formula}) comes from the $\mathfrak{p}$ different values of the time index $%
t$ modulus $\mathfrak{p}$. For each such $t$ (say, $t=0,1,..,\mathfrak{p}-1$%
), the window $x_{t}^{L}=x_{t},x_{t+1},...,x_{t+L-1}$ splits in $\mathfrak{p}
$ disjoint groups of $\nu $ points each as follows:%
\begin{equation}
\{x_{t+j}:t+j=0~\mathrm{{mod}~}\mathfrak{p}\}<\{x_{t+j}:t+j=1~\mathrm{{mod}~}%
\mathfrak{p}\}<....<\{x_{t+j}:t+j=\mathfrak{p}-1~\mathrm{{mod}~}\mathfrak{p}%
\}\mathrm{,}  \label{splitting}
\end{equation}%
\noindent where $0\leq j\leq L-1$ and $x_{t+j}=y_{\mathfrak{p}-1}$ for all $%
t+j=\mathfrak{p}-1~\mathrm{{mod}~}\mathfrak{p}$ (last group). As a result,
the time indices of the $\nu $ noisy points $x_{t+j}$ in each of the first $%
\mathfrak{p}-1$ groups of the splitting (\ref{splitting}) can be ordered in
any of the $\nu !$ permutations ($\nu $-patterns) possible, while the time
indices of the noiseless points $x_{t+j}=y_{\mathfrak{p}}$ in the last group
leads to only one $\nu $-pattern, namely:\ the permutation consisting of the
corresponding time indices in increasing order (according to the convention
for repeated values). This explains the second factor $\left( \nu !\right) ^{%
\mathfrak{p}-1}$ in Equation (\ref{formula}). Therefore, as $\nu =L/%
\mathfrak{p}$ increases, 
\begin{equation}
\ln \mathcal{A}_{L}(\mathbf{X}_{\mathfrak{p}})=\ln \mathfrak{p}+(\mathfrak{p}%
-1)\ln [(L/\mathfrak{p})!]\sim (\mathfrak{p}-1)\frac{L}{\mathfrak{p}}\ln 
\frac{L}{\mathfrak{p}}=\frac{\mathfrak{p}-1}{\mathfrak{p}}L(\ln L-\ln 
\mathfrak{p})\sim cL\ln L  \label{Example}
\end{equation}%
where $c=(\mathfrak{p}-1)/\mathfrak{p}<1$ and $L=\nu \mathfrak{p}$.
Obviously, if the number of \textquotedblleft noiseless\textquotedblright\
measurements of the periodic cycle $(y_{0},y_{1}...,y_{\mathfrak{p}-1})$ is
generalized to $m$, then, $c=(\mathfrak{p}-m)/\mathfrak{p}$.
\end{example}

The noisy process presented in Example \ref{ExampleSubfact} will be
discussed with greater detail in Section \ref{sec53}. In particular, the
asymptotic growth of $\ln \mathcal{A}_{L}(\mathbf{X}_{\mathfrak{p}})$
depends on $L$ modulus$\mathrm{~}\mathfrak{p}$.

Since real-world data is noisy, one certainly expects super-exponentially
growing numbers of allowed patterns in empirical observations, although
sub-factorial growths such as in Equation (\ref{Example}) seem elusive.

\section{Permutation complexity functions and classes}
\label{sec3}

Next we wish to associate the notion of permutation complexity to processes
ranging from deterministically generated signals to white noise.
Unfortunately, the metric and topological permutation entropies are not up
to the job. For instance, $h_{0}^{\ast }(\mathbf{X})$ converges for
deterministic processes (Theorem \ref{Thm1}) but diverges for
forbidden-pattern-free (FPF) signals: 
\begin{equation}
h_{0}^{\ast }(\mathbf{X})=\lim_{L\rightarrow \infty }\frac{1}{L}\ln \mathcal{%
A}_{L}(\mathbf{X})=\lim_{L\rightarrow \infty }\ln L=\infty  \label{h*_0(WN)}
\end{equation}%
by (\ref{allowed pat X}).

This being the case, we shall rather focus on the \textit{permutation
complexity} (PC) \textit{class} of the process $\mathbf{X}$, which we define
by the asymptotic growth of $\ln \mathcal{A}_{L}(\mathbf{X})$ with respect
to $L$. In view of Equation (\ref{allowed pat f}) for the deterministic
processes and Equation (\ref{allowed pat X}) for the FPF processes, we
propose the following definition.

\begin{definition}
\label{DefClass}Let $g(t)$ be a positive, invertible and sufficiently
regular function of the real variable $t\geq 0$. A process $\mathbf{X}$ is
said to belong to the PC class $g$ if%
\begin{equation}
\ln \mathcal{A}_{L}(\mathbf{X})\sim g(L)  \label{pcf}
\end{equation}%
as $L\rightarrow \infty $.
\end{definition}

The function $g(t)$ will be called the \textit{permutation complexity} (PC)%
\textit{\ function} of the process $\mathbf{X}$. The name of $g(t)$ is
suggested by Equation (\ref{allowed pat f}) with $L=\left\lfloor
t\right\rfloor $, since the topological entropy $h_{0}(f)$ measures the
dynamical complexity of the deterministic dynamic generated by $f$. In some
cases, for convenience or economy, we will group a family of classes under a
single \textquotedblleft super-class\textquotedblright , although we will
also call them classes.

\begin{remark}
\label{Remark PCF}Two important observations on the PC function of a process:

\begin{enumerate}
\item Regarding regularity, we will assume henceforth that $g(t)$ is
bicontinuous, i.e., both $g(t)$ and its inverse $g^{-1}(s)$ are continuous.
The bicontinuity and invertibility of $g(t)$ imply that $g(t)$ and, hence, $%
g^{-1}(t)$ are strictly monotonic \cite{Apostol1974}, in fact, strictly
increasing in our case.

\item Regarding uniqueness, the complexity class $g$ depends only on the
asymptotic behavior of $g(t)$; any other function $\tilde{g}(t)\sim g(t)$
(i.e., $\tilde{g}(t)=g(t)+o(g(t))$) will work out as well. Put in other
terms, PC classes are defined up to asymptotic equivalence.
\end{enumerate}
\end{remark}

Considering the growth of $\mathcal{A}_{L}(\mathbf{X})$, there is a first
clear-cut division of processes: deterministic processes, for which $%
\mathcal{A}_{L}(\mathbf{X})$ grows exponentially, and FPF processes, for
which $\mathcal{A}_{L}(\mathbf{X})$ grows factorially. Data analysis and
numerical simulations show that the latter are ubiquitous in practice.
Processes with super-exponential but sub-factorial growths will be grouped
in a third class. Specifically, we are going to turn our attention to the
following three PC classes.

\begin{description}
\item[(C1)] \textit{Exponential class}: $\ln \mathcal{A}_{L}(\mathbf{X})\sim
cL$ ($c>0$), i.e.,%
\begin{equation}
g(t)=ct=:g_{\text{exp}}(t).  \label{g_exp}
\end{equation}
\end{description}

Thus, the exponential class is actually a class of classes, one for each $c$%
. Each class with a given constant $c$ includes all deterministic processes $%
\mathbf{X}=f$ with topological entropy $h_{0}(f)=c$; maps with the same $%
h_{0}(f)$ are said to be topologically conjugate. Therefore, deterministic
processes with different topological entropies have different permutation
complexities, in line with the concept of dynamical complexity.

Moreover, for each $c>0$ the corresponding class is non-empty. Indeed, for
every $\sigma>1$ there exists a piecewise monotone map $f$ with $%
h_{0}(f)=\ln \sigma>0$, namely, the piecewise linear selfmap of the interval 
$[0,1]$ with constant slopes $\pm \sigma$. Therefore, any function of the
form $g(t)=ct$ is the PC function of a deterministic processes generated by
a piecewise linear map with $\sigma=e^{c}$.

\begin{description}
\item[(C2)] \textit{Factorial class}: $\ln \mathcal{A}_{L}(\mathbf{X})\sim
L\ln L$, i.e., 
\begin{equation}
g(t)=t\ln t=:g_{\text{fac}}(t).  \label{g_fact}
\end{equation}
\end{description}

Regarding the applications, the factorial class is the most interesting
since virtually all random processes in practice are FPF.

\begin{description}
\item[(C3)] \textit{Sub-factorial class}: $\ln \mathcal{A}_{L}(\mathbf{X}%
)\sim g(t)$, where (i) $g_{\text{exp}}(t)=o(g(t))$ and $g(t)=o(g_{\text{fac}%
}(t))$ or, else, (ii)%
\begin{equation}
g(t):=ct\ln t\;\text{\ with\ \ }0<c<1.  \label{gamma_sub}
\end{equation}
\end{description}

Unlike the exponential and factorial classes, whose PC functions are defined
explicitly, the PC functions of the sub-factorial class are defined both
implicitly (condition C3(i)) and explicitly (Equation (\ref{gamma_sub})).

The sub-factorial class is also a class of classes. This class is
potentially the largest since it fills the gap between the exponential and
the factorial class, although practical examples are hard to find. Examples
of functions $g(t)$ such that $ct=o(g(t))$ and $g(t)=o(t\ln t)$ (condition
C3(i)) are 
\begin{equation}
g(t)=t\ln ^{(n)}t\;\;(n\geq 2),  \label{gamma_sub2}
\end{equation}%
where $\ln ^{(n)}t$ denotes the composition of the logarithmic function $n$
times. Toy models with PC functions of the form (\ref{gamma_sub}) were
presented in Example \ref{ExampleSubfact}. Prompted by this example, in the
forthcoming theorems we will use $g_{\text{sub}}(t):=ct\ln t$, $0<c<1$, as a
prototypical PC function of the sub-factorial class, although the other
representatives in Equation (\ref{gamma_sub2}) will also be considered
alongside.

Let us mention in passing that $g(t)=ct\ln t$ with $c>1$ is not the PC
function of any random process in the ordinal representation. However,
statistical complex systems may have such super-factorial growth rates of
the state space as the number of constituents increases \cite{JPPT2018JPA}.

Of course, the exponential and sub-factorial classes can be thought of as
refined in smaller classes whenever convenient.

To conclude this section, let us return to the asymmetry between
deterministic and FPF processes regarding their PC functions. As already
mentioned, $g_{\text{exp}}(t)=ct$ distinguishes deterministic processes from
each other up to topological conjugacy, since $c=h_{0}(f)$ in this case. On
the contrary, all FPF processes have the same PC function, namely, $g_{\text{%
fac}}(t)=t\ln t$, the reason being that $\mathcal{A}_{L}(\mathbf{X})$ counts
the number of allowed $L$-patterns for $L\gg 1$, and this number is $L!$ for
all FPF processes. The result is that $g_{\text{fac}}(t)$ is useless in
distinguishing FPF processes from each other. A possible way out of this
shortcoming is to take into account the probability distribution of the
allowed $L$-patterns, e.g., through permutation entropies tailored to each
PC class, as we do in the next section. A different approach, based on the
convergence rate of $\ln \mathcal{A}_{L}(\mathbf{X})$ to $g_{\text{fac}%
}(t)=t\ln t$\ with the length of the time series, will be presented in
Section \ref{sec52}, when discussing numerical simulations.

\section{Generalized permutation entropy}

\label{sec4}

Let $p=(p_{1},p_{2},\ldots ,p_{W})$ be a discrete probability distribution;
we denote by $\mathcal{P}_{W}$ the set of all discrete probability
distributions with $W$ entries. From the point of view of information
theory, an entropy is a positive functional $S(p)$ defined on $\cup _{W\geq
2}\mathcal{P}_{W}$ that satisfies certain properties required by Shannon 
\cite{Shannon,Shannon2} and Khinchin \cite{Khinchin} in their uniqueness
theorem for $S(p)$, and nowadays known as the \textit{Shannon-Khinchin} (SK) 
\textit{axioms}. The first three (SK) axioms are:

\begin{description}
\item[(SK1)] \textit{Continuity}: $S$ is continuous on $\mathcal{P}_{W}$ for
each $W$.

\item[(SK2)] \textit{Maximality}: For each $(p_{1},p_{2},\ldots ,p_{W})\in 
\mathcal{P}_{W}$, 
\begin{equation*}
S(p_{1},p_{2},\ldots ,p_{W})\leq S\left( \tfrac{1}{W},\tfrac{1}{W},\ldots ,%
\tfrac{1}{W}\right) .
\end{equation*}

\item[(SK3)] \textit{Expansibility}: For each $(p_{1},p_{2},\ldots
,p_{W})\in \mathcal{P}_{W}$ and $i\in \{0,1,\ldots ,n-1\}$, 
\begin{equation*}
S(p_{1},\ldots ,p_{i},0,p_{i+1},\ldots ,p_{n})=S(p_{1},p_{2},\ldots ,p_{n}).
\end{equation*}
\end{description}

If $S(p)$ satisfies (SK1)-(SK3) and a fourth axiom called \textit{%
separability} or \textit{strong additivity} (SK4), then $S(p)$ must be the 
\textit{Boltzmann-Gibbs-Shannon entropy }(usually called Shannon entropy in
information theory): 
\begin{equation}
S(p)=-k\sum_{i=1}^{W}p_{i}\ln p_{i}=:S_{BGS}(p),  \label{S_BGS}
\end{equation}%
where $k$ is an arbitrary positive constant that can be interpreted as the
freedom in the choice of the logarithm base. If, otherwise, $S(p)$ only
satisfies the first three SK axioms, then $S(p)$ is called a \textit{%
generalized entropy} and its form is only known under additional assumptions 
\cite{Amigo2018,Ilic2021}.

\begin{remark}
In the case of \textit{group entropies}, of interest in this work, the
strong additivity axiom (SK4) is replaced by the \textit{composability axiom}%
, namely, the requirement that there exists a suitable function of the form $%
\Phi (x,y)=x+y+$ higher order terms, which takes care of the composition
process of two independent systems that are described by probability
distributions. Specifically, 
\begin{equation}
S(p\times q)=\Phi (S(p),S(q)),  \label{Phi}
\end{equation}%
where $p,q$ are any two probability distributions and $p\times q$ is their
product distribution. Here $\Phi $ is supposed to satisfy three properties: 
\textit{(i)} $\Phi (x,y)=\Phi (y,x)$ (symmetry), (ii) $\Phi (x,\Phi
(y,z))=\Phi (\Phi (x,y),z)$ (associativity), and \textit{(iii)} $\Phi
(x,0)=x $ (\textit{null-composability}), which coincide with those of a
formal group law \cite{TJ2020}. Thus, a group entropy is a functional
satisfying the first three SK axioms and the composability axiom. Property (%
\ref{Phi}) is actually crucial to generalize the standard notion of entropy.
The entropies of Shannon (\ref{S_BGS}), R\'{e}nyi (\ref{Renyi}), and Tsallis 
\cite{Tsa2009} belong to this class. A multivariate extension of the notion
of group entropy has been proposed in \cite{PT2020}. An independent
axiomatic approach to composable entropies, the pseudoadditive entropies,
has been discussed in \cite{Ilic2021} (see also the references therein).
\end{remark}

As it turns out, $S_{BGS}(p)$ is not well suited to deal with the diversity
of complex systems, including the thermodynamical ones. In complexity
theory, systems are usually classified in sub-exponential, exponential and
super-exponential \textquotedblleft complexity classes\textquotedblright ,
according to the state growth rates of the states with the number of
constituents $N$. For each of such classes there is a specific group
entropy, called $Z$-entropy, that is \textit{extensive} for the systems in
the class, meaning that it is finite over uniform probability distributions
in the limit $N\rightarrow \infty $ \cite{TJ2020,PT2020}.

In this section we capitalize on the similarities between this approach and
ours to extend the concept of permutation entropy from deterministic
processes to random processes via the $Z$-entropies for the exponential,
sub-factorial and factorial complexity classes.

\subsection{Permutation entropy of finite order}

\label{sec41}

Given a probability distribution $p=(p_{1},...,p_{W})$ and $\alpha \in 
\mathbb{R}$, $\alpha >0$, the R\'{e}nyi entropy $R_{\alpha }(p)$ is defined
as \cite{Renyi1960}%
\begin{equation}
R_{\alpha }(p)=\frac{k}{1-\alpha }\ln \left(
\sum\limits_{i=1}^{W}p_{i}^{\alpha }\right)  \label{Renyi}
\end{equation}%
($k>0$) for $\alpha \neq 1$, and 
\begin{equation}
R_{1}(p):=\lim_{\alpha \rightarrow 1}R_{\alpha }(p)=-k\sum_{i=1}^{W}p_{i}\ln
p_{i}=S_{BGS}(p),  \label{R_1}
\end{equation}%
see Equation (\ref{S_BGS}). In statistical mechanics, $k=1.380649\times
10^{-23}$ JK$^{\text{-1}}$ is the Boltzmann constant; in information theory, 
$k$ is usually set equal to 1, as we do from now on.

The following definition is an adaptation to our context of the concept of $%
Z $-entropy \cite{TJ2020,PT2020}. Remember that, according to Remark \ref%
{Remark PCF} on the PC function $g(t)$ of a process, its inverse $g^{-1}(s)$
is continuous and strictly increasing.

\begin{definition}
Let $g(t)$ be the PC function of a process $\mathbf{X}$. The \emph{(metric)
permutation entropy of order} $L$ of $\mathbf{X}$ is defined as%
\begin{equation}
Z_{g,\alpha }^{\ast }(X_{0}^{L})\equiv Z_{g,\alpha }^{\ast
}(p)=g^{-1}(R_{\alpha }(p))-g^{-1}(0),  \label{Zg_entropy}
\end{equation}%
where $\alpha >0$, $p$ is the probability distribution of the ordinal $L$%
-patterns of $X_{0}^{L}=X_{0},X_{1},...,X_{L-1}$, and $R_{\alpha }(p)$ is R%
\'{e}nyi's entropy.
\end{definition}

The term $-g^{-1}(0)$ in (\ref{Zg_entropy}) ensures that $Z_{g,\alpha
}^{\ast }(X_{0}^{L})=0$ for \textit{singular} probability distributions,
i.e., when $p_{i_{0}}=1$ and $p_{i}=0$ for $i\neq i_{0}$. By the continuity
and strictly increasing monotonicity of $g^{-1}(s)$, $Z_{g,\alpha }^{\ast
}(X_{0}^{L})$ fulfills the axioms (SK1)-(SK3), i.e., $Z_{g,\alpha }^{\ast
}(X_{0}^{L})$ is a generalized entropy. In addition, $Z_{g,\alpha }^{\ast
}(X_{0}^{L})$ satisfies the composability axiom (\ref{Phi}) with $\Phi
(x,y)=\chi ^{-1}(\chi (x)+\chi (y))$, where $\chi (t)=g(t+g^{-1}(0))$ and,
hence, $\chi ^{-1}(s)=g^{-1}(s)-g^{-1}(0)$.

By its definition (and the increasing monotonicity of $g^{-1}(s)$), $%
Z_{g,\alpha }^{\ast }(X_{0}^{L})$ inherits some of the properties of $%
R_{\alpha }(p)$. For instance, $Z_{g,\alpha }^{\ast }(X_{0}^{L})$ is
monotone decreasing with respect to the parameter $\alpha $ \cite{Amigo2018},%
\begin{equation}
Z_{g,\alpha }^{\ast }(X_{0}^{L})\geq Z_{g,\beta }^{\ast }(X_{0}^{L})\;\;%
\text{for\ \ }\alpha <\beta  \label{hierarchy}
\end{equation}%
and each $L\geq 2$.

To formulate the next theorem, we need to introduce the special function $%
\mathcal{L}(x)$, by which we denote the \textit{principal} branch of the
real $W$-\textit{Lambert function}. This is a smooth function, defined as
the solution of $ye^{y}=x$, i.e., $W(x)e^{W(x)}=x$, for $x\geq -e^{-1}$. $%
\mathcal{L}(x)$ is the unique solution for $x\geq 0$, while for $-e^{-1}\leq
x<0$ there is another solution belonging to a second branch. Some basic
properties of $\mathcal{L}(x)$ are the following \cite{Olver2010}: (i) $%
\mathcal{L}(x)$ is strictly increasing and $\cap $-convex; (ii) $\mathcal{L}%
(-e^{-1})=-1$ and $\mathcal{L}(0)=0$; (iii) $\mathcal{L}(x)>0$ for $x>0$;
(iv) $\mathcal{L}(x)>1$ for $x>e$; and (v) $\mathcal{L}(x)\rightarrow \infty 
$ as $x\rightarrow \infty $. Moreover, $\mathcal{L}(x)$ satisfies the
identity 
\begin{equation}
\mathcal{L}(x\ln x)=\ln x  \label{identity}
\end{equation}%
for $x\geq e^{-1}$.

\begin{theorem}
\label{ThmZ_class}Given a process $\mathbf{X}$, let $p$ be the probability
distribution of the ordinal $L$-patterns of $X_{0}^{L}$. For the PC classes
(C1)-(C3) of Section \ref{sec3}, the following holds.

\begin{description}
\item[(a)] For $g_{\text{exp}}(t)=ct$:%
\begin{equation}
Z_{g_{\text{exp}},\alpha }^{\ast }(X_{0}^{L})=\frac{1}{c}R_{\alpha }(p)=:Z_{%
\text{exp},\alpha }^{\ast }(X_{0}^{L}).  \label{Z_exp}
\end{equation}

\item[(b)] For $g_{\text{fac}}(t)=t\ln t$:%
\begin{equation}
Z_{g_{\text{fac}},\alpha }^{\ast }(X_{0}^{L})=e^{\mathcal{L}[R_{\alpha
}(p)]}-1=:Z_{\text{fac},\alpha }^{\ast }(X_{0}^{L}).  \label{Z_fac}
\end{equation}

\item[(c)] For $g_{\text{sub}}(t)=ct\ln t$ ($0<c<1$):%
\begin{equation}
Z_{g_{\text{sub}},\alpha }^{\ast }(X_{0}^{L})=e^{\mathcal{L}[R_{\alpha
}(p)/c]}-1=:Z_{\text{sub},\alpha }^{\ast }(X_{0}^{L}).  \label{Z_sub}
\end{equation}
\end{description}
\end{theorem}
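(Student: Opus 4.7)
The plan is to derive each of the three formulas by directly inverting the corresponding PC function and plugging the result into the definition $Z_{g,\alpha}^{\ast}(X_0^L) = g^{-1}(R_\alpha(p)) - g^{-1}(0)$. In each case I only need to (i) identify $g^{-1}$ as a closed-form expression, (ii) evaluate $g^{-1}(0)$, and (iii) assemble. By Remark~\ref{Remark PCF}, $g^{-1}$ is continuous and strictly increasing on the relevant domain, so these inverses are well-defined as soon as we restrict to a region where $g$ is monotonic.

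Part (a) is immediate: $g_{\text{exp}}(t) = ct$ gives $g_{\text{exp}}^{-1}(s) = s/c$ and $g_{\text{exp}}^{-1}(0) = 0$, so $Z_{g_{\text{exp}},\alpha}^{\ast}(X_0^L) = R_\alpha(p)/c$.

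For part (b) the main step is to invert $g_{\text{fac}}(t) = t\ln t$, which is strictly increasing on $t \geq 1$ (the range of interest, since the number of ordinal patterns is at least one). Setting $s = t\ln t$ and substituting $t = e^u$ turns this into $s = u e^u$, which by the defining equation of the principal Lambert branch gives $u = \mathcal{L}(s)$, hence $g_{\text{fac}}^{-1}(s) = e^{\mathcal{L}(s)}$. One can also derive this directly from the identity $\mathcal{L}(x\ln x) = \ln x$ stated in (\ref{identity}): applying $\mathcal{L}$ to both sides of $s = t\ln t$ yields $\mathcal{L}(s) = \ln t$, so $t = e^{\mathcal{L}(s)}$. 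Since $g_{\text{fac}}(1) = 0$, we have $g_{\text{fac}}^{-1}(0) = 1$, which matches $e^{\mathcal{L}(0)} = e^0 = 1$ and yields (\ref{Z_fac}).

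Part (c) is a rescaling of part (b): writing $s = ct\ln t$ as $s/c = t\ln t$ and applying the same Lambert-W inversion gives $g_{\text{sub}}^{-1}(s) = e^{\mathcal{L}(s/c)}$, with $g_{\text{sub}}^{-1}(0) = e^{\mathcal{L}(0)} = 1$, hence (\ref{Z_sub}). The only genuine subtlety I anticipate is a domain check in (b) and (c): since $t\ln t$ fails to be monotonic on $(0,1)$, one must justify that the Lambert branch we pick is the correct (principal) one for the arguments of interest. This is handled by observing that $R_\alpha(p) \geq 0$ for any probability distribution, so $s$ and $s/c$ lie in $[0,\infty)$, where the principal branch $\mathcal{L}$ is uniquely defined and strictly increasing. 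That confirms the inverses above and completes the proof.
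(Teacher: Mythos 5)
Your proof is correct and follows essentially the same route as the paper's: identify $g^{-1}$ in closed form via the Lambert function (the paper invokes the identity $\mathcal{L}(x\ln x)=\ln x$ directly, which you also note as an alternative to your $t=e^{u}$ substitution), evaluate $g^{-1}(0)=1$, and assemble. Your added remark on restricting to the principal branch for nonnegative arguments is a small extra care, not a different argument.
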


\begin{proof}
Equation (\ref{Z_exp}) follows readily from the definition (\ref{Zg_entropy}%
) and 
\begin{equation*}
g_{\text{exp}}^{-1}(s)=\frac{s}{c}.
\end{equation*}

If $s=ct\ln t$ ($0<c\leq 1$), then $\mathcal{L}(s/c)=\mathcal{L}(t\ln t)=\ln
t$ by the identity (\ref{identity}). Hence%
\begin{equation*}
g_{\text{fac}}^{-1}(s)=e^{\mathcal{L}(s)}\text{ }
\end{equation*}%
for $c=1$, 
\begin{equation*}
g_{\text{sub}}^{-1}(s)=e^{\mathcal{L}(s/c)}
\end{equation*}%
for $c<1$, and $g_{\text{fac}}^{-1}(0)=g_{\text{sub}}^{-1}(0)=e^{\mathcal{L}%
(0)}=1$. Equations (\ref{Z_fac}) and (\ref{Z_sub}) follow.
\end{proof}

\begin{remark}
Regarding Theorem \ref{ThmZ_class}, let us highlight the following points.

\begin{enumerate}
\item $Z_{\text{exp},\alpha }^{\ast }(p)=R_{\alpha }(p)$ for $c=1$, that is,
the sub-class that includes the maps with topological entropy $1$. Since $%
R_{\alpha }(p)$ is defined anyway up to a positive constant $k$, see
Equations (\ref{Renyi})-(\ref{R_1}), we may conclude that R\'{e}nyi's
entropy (of the probability distribution of the ordinal $L$-patterns) is the
permutation entropy when dealing with deterministic processes, regardless of
their topological entropy.

\item In particular (see Equation (\ref{R_1})),%
\begin{equation}
Z_{\text{exp},1}^{\ast }(X_{0}^{L})=S_{BGS}(p)=H^{\ast }(X_{0}^{L}),
\label{Remark82}
\end{equation}%
where $H^{\ast }(X_{0}^{L})$ is the conventional metric permutation entropy
of order $L$, Equation (\ref{h*_mu,L}). In other words, $Z_{g,\alpha }^{\ast
}(X_{0}^{L})$ reduces to the conventional permutation entropy under the
right assumptions. This justifies calling it a (generalized) permutation
entropy.

\item $Z_{\text{fac},\alpha }^{\ast }(X_{0}^{L})$ was used in \cite%
{Amigo2021} (with the notation $Z_{\alpha }^{\ast }(X_{0}^{L})$) to
generalize $H^{\ast }(X_{0}^{L})$ to FPF processes. There it is proved that%
\begin{equation}
Z_{\text{fac},\alpha }^{\ast }(p)=R_{\alpha }(p)-\frac{1}{2}R_{\alpha
}(p)^{2}+O(3)  \label{Remark83}
\end{equation}%
if $R_{\alpha }(p)<1/e$. Therefore, when $R_{\alpha }(p)$ is small, it is a
good approximation of $Z_{\text{fac},\alpha }^{\ast }(p)$.
\end{enumerate}
\end{remark}

As anticipated in Section \ref{sec3}, we have chosen $g_{\text{sub}%
}(t)=ct\ln t$ ($0<c<1$) in Theorem \ref{ThmZ_class}(c) mainly because of
Example \ref{ExampleSubfact}. For the choice $g_{\text{sub}}(t)=t\ln ^{(n)}t$
($n\geq 2$), the other examples of sub-factorial PC functions given in
Equation (\ref{gamma_sub2}), we need to generalize the Lambert function $%
\mathcal{L}(x)$. We define the \textit{generalized Lambert function} $%
\mathcal{L}^{(n)}(x)$ ($n\geq 1$, with $\mathcal{L}^{(1)}(x)=\mathcal{L}(x)$%
) by the functional equation%
\begin{equation}
\mathcal{L}^{(n)}(x)\exp ^{(n)}[\mathcal{L}^{(n)}(x)]=x  \label{Lambert(n)}
\end{equation}%
for $x\geq -\exp ^{(n)}(-1)$, where $\exp ^{(n)}(x)$ denotes the composition
of the exponential function $n$ times. Hence, $\mathcal{L}^{(n)}(x)\geq 0$
for $x\geq 0$, $\mathcal{L}^{(n)}(0)=0$, and the identity (\ref{identity})
generalizes to%
\begin{equation}
\mathcal{L}^{(n)}[x\ln ^{(n)}x]=\ln ^{(n)}x  \label{identity2}
\end{equation}%
for $x\geq \exp ^{(n)}(-1)$ (since $\mathcal{L}^{(n)}[-\exp ^{(n)}(-1)]=-1$%
). It follows that the inverse of $g(t)=t\ln ^{(n)}t$ is%
\begin{equation}
g^{-1}(s)=\exp ^{(n)}[\mathcal{L}^{(n)}(s)],  \label{inv_sub}
\end{equation}%
so that $g^{-1}(0)=1$ since $\mathcal{L}^{(n)}(0)=0$. Therefore, the
permutation entropy of order $L$ of the sub-factorial class defined by $%
g(t)=t\ln ^{(n)}t$ is 
\begin{equation}
Z_{g,\alpha }^{\ast }(X_{0}^{L})=\exp ^{(n)}[\mathcal{L}^{(n)}(R_{\alpha
}(p))]-1\;\;(n\geq 2).  \label{PE_sub}
\end{equation}%
For $n=1$ we recover $Z_{\text{fac},\alpha }^{\ast }(X_{0}^{L})$, Equation (%
\ref{Z_fac}).

\subsection{Permutation entropy rate}

\label{sec42}

According to axiom SK2, entropies reach their maxima over uniform
probability distributions. Sometimes these upper bounds are called the
topological versions of the corresponding entropies or simply topological
entropies. Thus, the topological version of $Z_{g,\alpha }^{\ast
}(X_{0}^{L}) $ is its tight upper bound, which is obtained over the uniform
distribution $p_{u}$ of the \textit{allowed} ordinal $L$-patterns for $%
\mathbf{X}$. This means that $p_{u}=(p_{1},...,p_{L!})$ with%
\begin{equation}
p_{i}=\left\{ 
\begin{array}{cl}
1/\mathcal{A}_{L}(\mathbf{X}) & \text{if the }i\text{th ordinal }L\text{%
-pattern is allowed for }\mathbf{X}\text{ } \\ 
0 & \text{if the }i\text{th ordinal }L\text{-pattern is forbidden for }%
\mathbf{X}%
\end{array}%
\right.  \label{p_u}
\end{equation}%
for $i=1,...,L!$. Note that%
\begin{equation}
R_{\alpha }(p_{u})=\ln \mathcal{A}_{L}(\mathbf{X})  \label{R(p_u)}
\end{equation}%
for $\alpha >0$. Plugging Equation (\ref{R(p_u)}) into (\ref{Zg_entropy}),
we are led to the following definition.

\begin{definition}
The \emph{topological permutation entropy of order} $L$ of a process $%
\mathbf{X}$ of class $g$ is defined as 
\begin{equation}
Z_{g,0}^{\ast }(X_{0}^{L})\equiv Z_{g,0}^{\ast }(p_{u})=g^{-1}(\ln \mathcal{A%
}_{L}(\mathbf{X}))-g^{-1}(0),  \label{Z*_g,0}
\end{equation}%
where $p_{u}$ is the uniform probability distribution of allowed $L$%
-patterns for $\mathbf{X}$ as defined in Equation (\ref{p_u}).
\end{definition}

The notation $Z_{g,0}^{\ast }$ for the topological permutation entropy is
justified because $\ln \mathcal{A}_{L}(\mathbf{X})$ is formally obtained
from Equation (\ref{Renyi}) by setting $\alpha =0$; indeed,%
\begin{equation}
R_{0}(p_{1},...,p_{L!})=\ln \left\vert \{p_{i}:p_{i}>0,\,1\leq i\leq L!\text{
}\}\right\vert =\ln \mathcal{A}_{L}(\mathbf{X})  \label{R_0}
\end{equation}%
for all $p=\{p_{1},...,p_{L!}\}$. It follows,%
\begin{equation}
R_{0}(p)\geq R_{\alpha }(p)  \label{R_0>R_alpha}
\end{equation}%
for all $\alpha >0$, so that 
\begin{equation}
Z_{g,0}^{\ast }(X_{0}^{L})\geq Z_{g,\alpha }^{\ast }(X_{0}^{L})
\label{hierarch}
\end{equation}%
for all $\alpha >0$ since $g^{-1}(s)$ is a strictly increasing function (see
Remark \ref{Remark PCF}).

Uniform probability distributions are special for several reasons. From the
viewpoint of statistical mechanics, they correspond to the most disordered
state, hence to equilibrium in the microcanonical ensemble. From the point
of view of information theory, they amount to the principle of insufficient
reason or maximum entropy principle \cite{Jaynes1957} under null knowledge
(maximum ignorance). Most important for us, the concept of extensivity
(inherited from thermodynamics) also refers to such probability
distributions: we say that an entropy $S(p)$ is \textit{extensive} if it
scales linearly with the number of constituents (degrees of freedom, etc.) $%
N $ of the system over the uniform probability distribution $p=(1/N,...,1/N)$%
, i.e.,%
\begin{equation}
\lim_{N\rightarrow \infty }\frac{S(\frac{1}{N},...,\frac{1}{N})}{N}\sim
~const>0.  \label{extensiv}
\end{equation}%
Therefore, extensivity depends on how the number of states grows with $N$,
the sub-exponential, exponential and super-exponential regimes (or classes)
being the most important ones.

\begin{theorem}
\label{ThmExtensivity}The permutation entropy $Z_{g,\alpha }^{\ast
}(X_{0}^{L})$ is extensive with respect to the parameter $L$. In fact, for
all $\alpha >0$,%
\begin{equation}
\frac{Z_{g,\alpha }^{\ast }(p_{u})}{L}=\frac{Z_{g,0}^{\ast }(X_{0}^{L})}{L}%
\sim 1.  \label{extensivity Z}
\end{equation}
\end{theorem}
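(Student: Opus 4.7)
The plan is to reduce the extensivity claim to a single scalar asymptotic, then verify it class by class using the explicit inverses computed in the proof of Theorem \ref{ThmZ_class}.

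First I would observe that on the uniform distribution $p_u$ supported on the $\mathcal{A}_L(\mathbf{X})$ allowed $L$-patterns, all Rényi entropies collapse to the same number: a direct computation from \eqref{Renyi} gives $R_\alpha(p_u)=\tfrac{1}{1-\alpha}\ln\bigl(\mathcal{A}_L(\mathbf{X})\cdot\mathcal{A}_L(\mathbf{X})^{-\alpha}\bigr)=\ln\mathcal{A}_L(\mathbf{X})$ for $\alpha\neq 1$, with the same value at $\alpha=1$ by \eqref{R_1} and at $\alpha=0$ by \eqref{R_0}. Plugging this into \eqref{Zg_entropy} and comparing with \eqref{Z*_g,0} yields $Z_{g,\alpha}^{\ast}(p_u)=Z_{g,0}^{\ast}(X_0^L)$ for every $\alpha>0$, which is the first equality in \eqref{extensivity Z}.

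It therefore suffices to prove $Z_{g,0}^{\ast}(X_0^L)/L\to 1$. Since $g^{-1}(0)$ is a fixed constant (equal to $0$ for $g_{\text{exp}}$ and to $1$ for $g_{\text{fac}}$ and $g_{\text{sub}}$, as computed inside the proof of Theorem \ref{ThmZ_class}), the additive term in \eqref{Z*_g,0} is $O(1)$ and it is enough to show $g^{-1}\!\bigl(\ln\mathcal{A}_L(\mathbf{X})\bigr)/L\to 1$. By Definition \ref{DefClass}, $\ln\mathcal{A}_L(\mathbf{X})=g(L)+o(g(L))$, so the task is to pass this asymptotic through $g^{-1}$ for each of the classes C1--C3. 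For C1 this is immediate from linearity, $g_{\text{exp}}^{-1}(s)=s/c$, giving $g_{\text{exp}}^{-1}(\ln\mathcal{A}_L(\mathbf{X}))/L=\ln\mathcal{A}_L(\mathbf{X})/(cL)\to 1$. For C2, I would combine $g_{\text{fac}}^{-1}(s)=e^{\mathcal{L}(s)}$ with the identity $\mathcal{L}(L\ln L)=\ln L$ from \eqref{identity} and the standard expansion $\mathcal{L}(x)=\ln x-\ln\ln x+o(1)$; writing $s_L:=\ln\mathcal{A}_L(\mathbf{X})\sim L\ln L$ one gets $e^{\mathcal{L}(s_L)}\sim s_L/\ln s_L\sim L\ln L/\ln L=L$. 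Class C3 reduces to the same calculation: for $g_{\text{sub}}(t)=ct\ln t$ replace $s_L$ by $s_L/c$, and for $g(t)=t\ln^{(n)}t$ use \eqref{inv_sub} with the generalized identity \eqref{identity2} to obtain $\exp^{(n)}[\mathcal{L}^{(n)}(s_L)]\sim L$.

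The main obstacle is this last compatibility step: bicontinuity and strict monotonicity of $g^{-1}$ alone do not propagate the relation $\sim$ through $g^{-1}$. What is really needed is that $g^{-1}$ be slowly varying enough that $g^{-1}(s(1+o(1)))/g^{-1}(s)\to 1$. For the three classes at hand this is transparent from the explicit formulas (linear in C1, Lambert-type with only a logarithmic correction in C2 and C3), which is exactly why the class-by-class verification is the right route; this also clarifies the role of the "sufficiently regular" hypothesis in Definition \ref{DefClass}. Assembling the three cases gives $Z_{g,0}^{\ast}(X_0^L)/L\to 1$, and combined with the first step establishes \eqref{extensivity Z}.
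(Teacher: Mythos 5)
Your argument is correct and follows the same overall route as the paper's proof: the first equality in (\ref{extensivity Z}) comes from $R_{\alpha }(p_{u})=\ln \mathcal{A}_{L}(\mathbf{X})$ for all $\alpha$ (the paper's Equation (\ref{R(p_u)})), and the limit $Z_{g,0}^{\ast }(X_{0}^{L})/L\rightarrow 1$ comes from substituting $\ln \mathcal{A}_{L}(\mathbf{X})\sim g(L)$ into (\ref{Z*_g,0}). Where you genuinely add something is in how that substitution is justified. The paper's proof is a one-liner that writes $g^{-1}(\ln \mathcal{A}_{L}(\mathbf{X}))\sim g^{-1}(g(L))=L$, i.e., it silently passes the asymptotic equivalence of Definition \ref{DefClass} through $g^{-1}$; as you correctly point out, bicontinuity and strict monotonicity alone do not license this (it fails, for instance, for slowly growing $g$ whose inverse grows exponentially), so the step needs the extra property that $g^{-1}(s(1+o(1)))/g^{-1}(s)\rightarrow 1$. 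You supply exactly this, class by class, using the explicit inverses $s/c$, $e^{\mathcal{L}(s)}$, $e^{\mathcal{L}(s/c)}$ and $\exp ^{(n)}[\mathcal{L}^{(n)}(s)]$ already computed for Theorem \ref{ThmZ_class}, together with $e^{\mathcal{L}(x)}=x/\mathcal{L}(x)$ and $\mathcal{L}(x)\sim \ln x$; all of these computations check out. The trade-off is that the paper's proof is shorter and formally covers ``any'' PC class $g$, but only by leaving the compatibility of $\sim$ with $g^{-1}$ implicit in the phrase ``sufficiently regular''; your version is restricted to the classes (C1)--(C3) actually used in the paper but is fully rigorous there, and it makes explicit what the regularity hypothesis on $g$ must really contain.
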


\begin{proof}
From definition (\ref{Z*_g,0}) and $\ln \mathcal{A}_{L}(\mathbf{X})\sim g(L)$%
, Equation (\ref{pcf}), we obtain%
\begin{equation*}
Z_{g,0}^{\ast }(X_{0}^{L})=g^{-1}(\ln \mathcal{A}_{L}(\mathbf{X}%
))-g^{-1}(0)\sim g^{-1}(g(L))-g^{-1}(0)=L-g^{-1}(0)\sim L.
\end{equation*}%
This proves Equation (\ref{extensivity Z}).
\end{proof}

To get rid of the dependence of $Z_{g,\alpha }^{\ast }(X_{0}^{L})$ on $L$,
we turn to the entropy rates per variable, $Z_{g,\alpha }^{\ast
}(X_{0}^{L})/L$, and take the limit when $L\rightarrow \infty $.

\begin{definition}
The \emph{permutation entropy rate}\textit{\ }(or just \emph{permutation
entropy})\textit{\ of a process} $\mathbf{X}$ of class $g$ is defined as 
\begin{equation}
z_{g,\alpha }^{\ast }(\mathbf{X})=\lim_{L\rightarrow \infty }\frac{1}{L}%
Z_{g,\alpha }^{\ast }(X_{0}^{L}),  \label{Z*}
\end{equation}%
where $\alpha \geq 0$: $z_{0}^{\ast }(\mathbf{X})$ is the \emph{topological}
permutation entropy of $\boldsymbol{X}$, and $z_{\alpha }^{\ast }(\mathbf{X}%
) $ with $\alpha >0$ is the $\emph{metric}$ permutation entropy of $\mathbf{X%
}$.
\end{definition}

The permutation entropy rate $z_{g,\alpha }^{\ast }(\mathbf{X})$ quantifies
an intrinsic property of the process $\mathbf{X}$. The existence of the
limit (\ref{Z*}) follows from Theorem \ref{ThmExtensivity}. As a matter of
fact, the existence of $z_{g,\alpha }^{\ast }(\mathbf{X})$ amounts to the
extensivity of $Z_{g,\alpha }^{\ast }(X_{0}^{L})$.

\begin{theorem}
For each complexity class $g$ and $\alpha >0$,%
\begin{equation}
z_{g,\alpha }^{\ast }(\mathbf{X})\leq z_{g,0}^{\ast }(\mathbf{X})=1
\label{bounds2}
\end{equation}
\end{theorem}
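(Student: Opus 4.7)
The plan is to peel off the two assertions separately, both of which follow almost immediately from results already in the excerpt, so the role of the proof is chiefly bookkeeping.

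For the equality $z_{g,0}^{\ast}(\mathbf{X})=1$, I would invoke Theorem \ref{ThmExtensivity} directly. That theorem establishes $Z_{g,0}^{\ast}(X_{0}^{L})/L\sim 1$ as $L\rightarrow\infty$, which by Remark \ref{Remarkiff} (the $\sim$ notation) means precisely that the ratio tends to $1$. Feeding this into definition (\ref{Z*}) with $\alpha=0$ yields
\begin{equation*}
z_{g,0}^{\ast}(\mathbf{X})=\lim_{L\rightarrow\infty}\frac{Z_{g,0}^{\ast}(X_{0}^{L})}{L}=1.
\end{equation*}
This also shows the limit exists for $\alpha=0$, which is important for the inequality step.

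For the inequality $z_{g,\alpha}^{\ast}(\mathbf{X})\leq z_{g,0}^{\ast}(\mathbf{X})$ for $\alpha>0$, I would start from the pointwise bound (\ref{hierarch}), that is, $Z_{g,0}^{\ast}(X_{0}^{L})\geq Z_{g,\alpha}^{\ast}(X_{0}^{L})$ for every $L\geq 2$. This bound was itself obtained by combining the Rényi hierarchy $R_{0}(p)\geq R_{\alpha}(p)$ of (\ref{R_0>R_alpha}) with the strict monotonicity of $g^{-1}$. Dividing both sides by $L>0$ preserves the inequality, and taking $\limsup$ (or the limit, assuming it exists as the definition requires) on both sides as $L\rightarrow\infty$ gives
\begin{equation*}
z_{g,\alpha}^{\ast}(\mathbf{X})=\lim_{L\rightarrow\infty}\frac{Z_{g,\alpha}^{\ast}(X_{0}^{L})}{L}\leq\lim_{L\rightarrow\infty}\frac{Z_{g,0}^{\ast}(X_{0}^{L})}{L}=z_{g,0}^{\ast}(\mathbf{X})=1,
\end{equation*}
where the final equality was established in the previous paragraph.

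I do not foresee a real obstacle: both steps reduce to combining an already-proven pointwise inequality (or asymptotic equivalence) with the elementary operation of dividing by $L$ and passing to the limit. The only subtlety worth flagging in a remark would be that for $\alpha>0$ the existence of the limit $\lim_{L\to\infty} Z_{g,\alpha}^{\ast}(X_{0}^{L})/L$ is part of the definition of $z_{g,\alpha}^{\ast}(\mathbf{X})$ (the paper explicitly notes this amounts to the extensivity of $Z_{g,\alpha}^{\ast}$); if one prefers to avoid this assumption, replacing the limit by $\limsup$ in the argument yields the same conclusion since the upper bound $z_{g,0}^{\ast}(\mathbf{X})=1$ is attained through a genuine limit.
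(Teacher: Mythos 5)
Your proposal is correct and follows the paper's own argument exactly: the equality $z_{g,0}^{\ast}(\mathbf{X})=1$ from the extensivity statement (\ref{extensivity Z}) of Theorem \ref{ThmExtensivity}, and the inequality from the pointwise bound (\ref{hierarch}) passed to the limit. Your extra remark about $\limsup$ versus $\lim$ is a sensible clarification but does not change the route.
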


\begin{proof}
The inequality $z_{g,\alpha }^{\ast }(\mathbf{X})\leq z_{g,0}^{\ast }(%
\mathbf{X})$ follows from Equation (\ref{hierarch}). The equality%
\begin{equation}
z_{g,0}^{\ast }(\mathbf{X})=\lim_{L\rightarrow \infty }\frac{1}{L}%
Z_{g,0}^{\ast }(X_{0}^{L})=1  \label{z*_0}
\end{equation}%
follows from Equation (\ref{extensivity Z}).
\end{proof}

Therefore, the values of the entropy rate $z_{g,\alpha }^{\ast }(\mathbf{X})$%
, $\alpha \geq 0$, are restricted to the unit interval $[0,1]$. The next
theorem gives $z_{g,\alpha }^{\ast }(\mathbf{X})$ for the exponential and
factorial classes, along with the sub-factorial class for $g(t)=ct\ln t$ ($%
0<c<1$).

\begin{theorem}
\label{Thm_h*}Given a process $\mathbf{X}$, let $p$ be the probability
distribution of the ordinal $L$-patterns of $X_{0}^{L}$. The permutation
entropy rate of $\mathbf{X}$ is given as follows.

\begin{description}
\item[(a)] For the exponential class:%
\begin{equation}
z_{\text{exp},\alpha }^{\ast }(\mathbf{X}):=z_{g_{\text{exp}},\alpha }^{\ast
}(\mathbf{X})=\lim_{L\rightarrow \infty }\frac{1}{cL}R_{\alpha }(p).
\label{z*_exp}
\end{equation}%
In particular, if $\mathbf{X}=f$ and $\alpha =1$, then%
\begin{equation}
z_{\text{exp},1}^{\ast }(\mathbf{X})=\frac{h(f)}{h_{0}(f)},  \label{z*_expB}
\end{equation}%
where $h(f)$ is the Kolmogorov-Sinai entropy of $f$ and $h_{0}(f)$ is its
topological entropy.

\item[(b)] For the factorial class:%
\begin{equation}
z_{\text{fac},\alpha }^{\ast }(\mathbf{X}):=z_{g_{\text{fac}},\alpha }^{\ast
}(\mathbf{X})=\lim_{L\rightarrow \infty }\frac{1}{L}e^{\mathcal{L}[R_{\alpha
}(p)]}.  \label{z*_fac}
\end{equation}

\item[(c)] For the sub-factorial class (defined by $g_{\text{sub}}(t)=ct\ln
t $, $0<c<1$):%
\begin{equation}
z_{\text{sub},\alpha }^{\ast }(\mathbf{X}):=z_{g_{\text{sub}},\alpha }^{\ast
}(\mathbf{X})=\lim_{L\rightarrow \infty }\frac{1}{L}e^{\mathcal{L}[R_{\alpha
}(p)/c]}.  \label{z*_sub}
\end{equation}
\end{description}
\end{theorem}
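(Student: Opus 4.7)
The plan is to derive each part directly from the closed-form expressions for $Z_{g,\alpha}^{\ast}(X_{0}^{L})$ established in Theorem \ref{ThmZ_class}, divide by $L$, and pass to the limit, using Theorem \ref{ThmExtensivity} to justify that the limit exists. In all three cases, the definition
\[
z_{g,\alpha}^{\ast}(\mathbf{X})=\lim_{L\to\infty}\frac{1}{L}Z_{g,\alpha}^{\ast}(X_{0}^{L})
\]
will be combined with the explicit inversion of the relevant PC function, and any additive constants (such as the $-g^{-1}(0)$ term) will be absorbed into a vanishing $O(1/L)$ contribution.

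For part (b), Theorem \ref{ThmZ_class}(b) gives $Z_{\text{fac},\alpha}^{\ast}(X_{0}^{L})=e^{\mathcal{L}[R_{\alpha}(p)]}-1$, so dividing by $L$ and noting that $1/L\to 0$ immediately yields Equation (\ref{z*_fac}). Part (c) is treated identically: Theorem \ref{ThmZ_class}(c) gives $Z_{\text{sub},\alpha}^{\ast}(X_{0}^{L})=e^{\mathcal{L}[R_{\alpha}(p)/c]}-1$, and the same step produces Equation (\ref{z*_sub}). For part (a), Theorem \ref{ThmZ_class}(a) gives $Z_{\text{exp},\alpha}^{\ast}(X_{0}^{L})=R_{\alpha}(p)/c$, which divided by $L$ is exactly (\ref{z*_exp}).

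The only genuinely nontrivial piece is the identification in Equation (\ref{z*_expB}). Here I would specialize (\ref{z*_exp}) to a piecewise monotone deterministic process $\mathbf{X}=f$ and $\alpha=1$. Using $R_{1}(p)=S_{BGS}(p)=H^{\ast}(X_{0}^{L})$ from Equation (\ref{R_1}) and the definition (\ref{h*_mu}) of $h^{\ast}(f)$, the limit $\lim_{L\to\infty}\tfrac{1}{L}R_{1}(p)$ equals $h^{\ast}(f)$, which by Theorem \ref{Thm1}(a) equals the Kolmogorov-Sinai entropy $h(f)$. The denominator $c$ in (\ref{z*_exp}) is recognized as $h_{0}(f)$ because, for a deterministic process, Equation (\ref{allowed pat f}) together with Definition \ref{DefClass} forces $c=h_{0}(f)$. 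Combining these yields $z_{\text{exp},1}^{\ast}(\mathbf{X})=h(f)/h_{0}(f)$. The main point requiring care is this matching of $c$ with $h_{0}(f)$ and the justified use of $\limsup=\lim$ when invoking Theorem \ref{Thm1}; once Theorem \ref{ThmExtensivity} secures convergence of the normalized entropies, the $\limsup$ in (\ref{h*_mu}) is in fact a genuine limit for piecewise monotone maps, so no ambiguity remains.
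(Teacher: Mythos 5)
Your proposal is correct and follows essentially the same route as the paper's own proof: divide the closed-form expressions from Theorem \ref{ThmZ_class} by $L$, discard the additive constant, and for Equation (\ref{z*_expB}) combine $R_{1}(p)=H^{\ast}(X_{0}^{L})$, Theorem \ref{Thm1}(a), and the identification $c=h_{0}(f)$. Your extra remarks on the vanishing $-g^{-1}(0)$ term and on $\limsup$ versus $\lim$ are sensible points of care but do not change the argument.
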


\begin{proof}
Use the definition (\ref{Z*}) and 
\begin{equation*}
Z_{g,\alpha }^{\ast }(X_{0}^{L})\sim Z_{g_{\text{exp}},\alpha }^{\ast
}(X_{0}^{L})=\frac{1}{c}R_{\alpha }(p)
\end{equation*}%
for $\mathbf{X}$ in the exponential class (Theorem \ref{ThmZ_class}(a)),%
\begin{equation*}
Z_{g,\alpha }^{\ast }(X_{0}^{L})\sim Z_{g_{\text{fac}},\alpha }^{\ast
}(X_{0}^{L})=e^{\mathcal{L}[R_{\alpha }(p)]}-1
\end{equation*}%
for $\mathbf{X}$ in the factorial class (Theorem \ref{ThmZ_class}(b)), and%
\begin{equation*}
Z_{g,\alpha }^{\ast }(X_{0}^{L})\sim Z_{g_{\text{sub}},\alpha }^{\ast
}(X_{0}^{L})=e^{\mathcal{L}[R_{\alpha }(p)/c]}-1
\end{equation*}%
for $\mathbf{X}$ in the sub-factorial class defined by $g_{\text{sub}%
}(t)=ct\ln t$, $0<c<1$ (Theorem \ref{ThmZ_class}(c)), to derive Equations (%
\ref{z*_exp}), (\ref{z*_fac}) and (\ref{z*_sub}), respectively.

As for Equation (\ref{z*_expB}), use (i) $R_{1}(p)=S_{BGS}(p)=H^{\ast
}(X_{0}^{L})$, (ii)%
\begin{equation*}
\lim_{L\rightarrow \infty }\frac{1}{L}R_{1}(p)=\lim_{L\rightarrow \infty }%
\frac{1}{L}H^{\ast }(X_{0}^{L})=h(f)
\end{equation*}%
(Theorem \ref{Thm1}(a)), and (iii) $g_{\text{exp}}(t)=ct=h_{0}(f)t$ in
Equation (\ref{z*_exp}).
\end{proof}

Note that Equation (\ref{z*_expB}) complies with (\ref{bounds2}) for $g=g_{%
\text{exp}}$ and $\alpha =1$ because $h_{0}(f)\geq h(f).$

\begin{remark}
\label{RemarkLast}A few closing observations on permutation entropy rates.

\begin{enumerate}
\item The hierarchical order (\ref{hierarchy}) and (\ref{hierarch}) carries
on trivially to permutation entropy rates:%
\begin{equation}
z_{g,\alpha }^{\ast }(\mathbf{X})\geq z_{g,\beta }^{\ast }(\mathbf{X})
\label{Remark15a}
\end{equation}%
for $0\leq \alpha <\beta $.

\item For the sub-factorial subclasses defined by $g(t)=t\ln ^{(n)}t$, $%
n\geq 2$, use Equation (\ref{PE_sub}) to obtain%
\begin{equation}
z_{g,\alpha }^{\ast }(\mathbf{X})=\lim_{L\rightarrow \infty }\frac{1}{L}\exp
^{(n)}[\mathcal{L}^{(n)}[R_{\alpha }(p)]].  \label{Remark15b}
\end{equation}

\item For white noise (WN), the probability distribution of the $L$-patterns
is uniform for every $L$. Therefore, 
\begin{equation}
z_{\text{fac},\alpha }^{\ast }(\text{WN})=z_{\text{fac},0}^{\ast }(\text{WN}%
)=1  \label{Remark15c}
\end{equation}%
for every $\alpha >0$ by Equation (\ref{z*_0}).
\end{enumerate}
\end{remark}

\section{Numerical simulations}

\label{sec5}

One of the most important applications of permutation complexity to data
analysis is the characterization and, hence, discrimination of time series.
In this section we illustrate the discrimination power of permutation
complexity in time series analysis. For this purpose, we are going to use
permutation entropy in Section \ref{sec51} and (perhaps surprisingly) the PC
function in Section \ref{sec52}. Both tools are used in Section \ref{sec53}
to further dissect the sub-factorial \textquotedblleft not-so-noisy
measurement of a periodic signal\textquotedblright\ introduced in Example %
\ref{ExampleSubfact}. Numerical receipts for computing ordinal patterns and
permutation entropies can be found, e.g., in References \cite%
{Pessa2021,Unakafova2013}.

Since real time series analysis have finite length, some allowed ordinal
patterns can be missing in random time series simply for statistical
reasons. Therefore, practitioners prefer to speak of \textit{visible patterns%
} and \textit{missing patterns} rather than allowed patterns and forbidden
patterns, respectively, as we will sometimes do as well.

\subsection{Time series discrimination using permutation entropies}

\label{sec51}

In Section \ref{sec4} we have explicitly constructed a permutation entropy $%
Z_{g,\alpha }^{\ast }(X_{0}^{L})$ for each PC class $g$ such that the                                                                                
corresponding entropy rate $z_{g,\alpha }(\mathbf{X})$ is finite. However,                                                                           
real-world data is noisy, which seems to exclude the exponential class                                                                               
---but not quite.                                                                                                                                    
                                                                                                                                                     
In nonlinear time series analysis, it is good practice to test for                                                                                   
determinism first. Underlying determinism in noisy time series can be                                                                                
unveiled by several techniques \cite{Kantz2004}, including forbidden ordinal                                                                         
patterns \cite{Amigo2010B}. If the noise to signal ratio is sufficiently                                                                             
small, then the data can be denoised, which allows the analyst to work with                                                                          
time series as good as noiseless deterministic. This is the exponential                                                                              
class, and the realm of the conventional permutation entropy $Z_{\text{exp}%
,\alpha }^{\ast }(X_{0}^{L})=R_{\alpha }(p)$ (or $H^{\ast }(X_{0}^{L})$ for $%
\alpha =1$) and its rate $z_{\text{exp},\alpha }^{\ast }(\mathbf{X})$. Since                                                                         
real-world time series are finite, the entropy rate $z_{\text{exp},\alpha                                                                            
}^{\ast }(\mathbf{X})$ can only be estimated if the convergence of $Z_{\text{%
exp},\alpha }^{\ast }(X_{0}^{L})/L$ is sufficiently fast. This can be                                                                                
checked, e.g., by plotting $Z_{\text{exp},\alpha }^{\ast                                                                                             
}(X_{0}^{L})/L=R_{\alpha }(p)/L$ vs $1/L$; if there is an interval where the                                                                         
curve is linear (before undersampling sets in), then fit a straight line to                                                                          
the linear segment of the curve and the sought limit $z_{\text{exp},\alpha                                                                           
}^{\ast }(\mathbf{X})=\lim_{L\rightarrow \infty }Z_{\text{exp},\alpha                                                                                
}^{\ast }(X_{0}^{L})/L$ is the intercept of the straight line with the                                                                               
vertical axis \cite[Sect. 2.1]{Amigo2010}. If desired, the parameter $c$                                                                             
that appears in Equation (\ref{Z_exp}) can be estimated by $H_{0}^{\ast                                                                              
}(X_{0}^{L})/L=\ln \mathcal{A}_{L}(\mathbf{X})/L$ (see Equations (\ref{h*_0}%
) and (\ref{h*_0,L})), because $H_{0}^{\ast }(X_{0}^{L})/L$ is a proxy of $%
c=h_{0}(f)$ for $L$ large enough by Theorem \ref{Thm1}(b) with $f=\mathbf{X}$%
. For the purpose of time series discrimination, however, $c$ can be                                                                                 
dispensed with, which amounts to setting $c=1$. The estimation of $\mathcal{A%
}_{L}(\mathbf{X})$ is usually done by just counting visible patterns in a                                                                            
sample of time series or even in a single, sufficiently long time series.                                                                            
This procedure can be justified if the orbits densely visit\ the state                                                                               
space, a property that goes by the name of transitivity. By the way, this is                                                                         
the first property in Devaney's definition of chaos and, in fact, it implies                                                                         
the other two properties (density of periodic points and sensitivity to                                                                              
initial conditions) for interval maps \cite{Ruette2017}.                                                                                             
                                                                                                                                                     
Furthermore, in nonlinear time series analysis, the dynamics of the (often                                                                           
unknown) system under observation is assumed to settle down on a low                                                                                 
dimensional attractor, where it is transitive. However, for the asymptotic                                                                           
dynamics to be accessible to finite precision observations and numerical                                                                             
simulations, it is necessary that the physical measure (see Equation (6)) is                                                                         
smooth\ (or absolutely continuous in technical terminology \cite{Walter2000}%
). Typically, the physical measure of chaotic attractors has a smooth                                                                                
density in the stretching, or unstable, directions of the dynamics, while it                                                                         
has a discontinuous (e.g., Cantor set-like) structure transversally to those                                                                         
directions \cite{Eckmann1985}; think of the H\'{e}non attractor. Finite                                                                              
precision smooths out the physical measure when the attractor is viewed                                                                              
transversally to the stretching directions.                                                                                                          
                                                                                                                                                     
Otherwise, if there is no good reason to assume determinism, the data are                                                                            
handled as random. Moreover, numerical simulations and empirical                                                                                     
observations show that virtually all random time series encountered in                                                                               
practice are FPF. This entails that $Z_{\text{fac},\alpha }^{\ast                                                                                    
}(X_{0}^{L})$ and $z_{\text{fac},\alpha }^{\ast }(\mathbf{X})$ are the                                                                               
appropriate tools to characterize random time series in the absence of more                                                                          
information. However, the factorial growth of the $L$-patterns and the                                                                               
associated computational cost restrict $L$ to moderate values in practice ($%
L\lesssim 7$) , which makes the numerical estimation of $z_{\text{fac}%
,\alpha }^{\ast }(\mathbf{X})$ an open question in general.                                                                                          
                                                                                                                                                     
For the above reasons, we have selected seven random processes from the                                                                              
factorial class to illustrate the discrimination power of permutation                                                                                
complexity with numerical simulations. This is also a particularly difficult                                                                         
case because all processes belong to the same complexity class in strict                                                                             
sense (there are no parameters in $g_{\text{fac}}(t)=t\ln t$). Those seven                                                                           
processes are the following.                                                                                                                         
                                                                                                                                                     
\begin{description}                                                                                                                                  
\item[(Fac1)] \textit{White noise} (WN) in the form of an independent and                                                                            
uniformly distributed process on $[0,1]$;                                                                                                            
                                                                                                                                                     
\item[(Fac2)] \textit{Fractional Gaussian noise} (fGn) with Hurst exponent $%
H=0.2$ \cite{Mandelbrot1968};                                                                                                                        
                                                                                                                                                     
\item[(Fac3-5)] \textit{Fractional Brownian motion} (fBm) with $H=0.2$, $0.4$                                                                        
(anti-persistent processes), and $H=0.6$ (persistent process) \cite%
{Mandelbrot1968};                                                                                                                                    
                                                                                                                                                     
\item[(Fac6)] \textit{Logistic map} \textit{with additive white noise} of                                                                            
amplitude $0.30$ (noisy LM), i.e., $x_{t}=y_{t}+z_{t}$, where $%
y_{t}=4y_{t-1}(1-y_{t-1})$ with $y_{0}=0.2002$, and $(z_{t})_{t\geq 0}$ is                                                                           
WN with $-0.30\leq z_{t}\leq 0.30$;                                                                                                                  
                                                                                                                                                     
\item[(Fac7)] \textit{Schuster map }with exponent 2 \cite{Schuster1988} and%
\textit{\ additive white noise} of amplitude $0.25$ (noisy SM), i.e., $%
x_{t}=y_{t}+z_{t}$, where $y_{t}=y_{t-1}+y_{t-1}^{2}~\mathrm{{mod}~1}$ with $%
y_{0}=0.2002$, and $(z_{t})_{t\geq 0}$ is WN with $-0.25\leq z_{t}\leq 0.25$.                                                                        
\end{description}                                                                                                                                    
                                                                                                                                                     
Arguments for this specific pick include that (i) the processes                                                                                      
(Fac1)-(Fac7) cover a diversity of interesting cases (white noise, random                                                                            
processes with long dependence ranges, deterministic dynamics contaminated                                                                           
with observational noise); (ii) they are relevant to time series analysis                                                                            
and familiar to the analysts; and (iii) there are well-tested numerical                                                                              
routines available for simulations.                                                                                                                  
                                                                                                                                                     
Figure~\ref{fig:Figura1} shows $\langle Z_{\text{fac},\alpha }^{\ast                                                                                 
}(X_{0}^{L})/L\rangle $, the average of $Z_{\text{fac},\alpha }^{\ast                                                                                
}(X_{0}^{L})/L$ over 35 time series for the random processes (Fac1)-(Fac7)                                                                           
and $3\leq L\leq 7$, where $\alpha =0.5$ (a), $\alpha =1$ (b) and $\alpha                                                                            
=1.5$ (c). For calculation purposes, the maximal length of the time series                                                                           
was set at $T_{\max }=50$,$000$ ($\gg 7!=5040$), but the computational loop                                                                          
is actually exited as soon as the probability distribution of the $L$%
-patterns stabilizes, so that the numerical routine is the same for all $%
3\leq L\leq 7$. We see in all panels of Figure~\ref{fig:Figura1} that $%
\langle Z_{\text{fac},\alpha }^{\ast }(X_{0}^{L})/L\rangle $ follows a                                                                               
distinct and seemingly convergent trajectory for each process as $L$ grows,                                                                          
upper bounded by the white noise. In agreement with (\ref{hierarchy}), $%
\langle Z_{\text{fac},0.5}^{\ast }(X_{0}^{L})/L\rangle \geq \langle Z_{\text{%
fac},1}^{\ast }(X_{0}^{L})/L\rangle \geq \langle Z_{\text{fac},1.5}^{\ast                                                                            
}(X_{0}^{L})/L\rangle $\ for each process. For white noise, $\langle Z_{%
\text{fac},\alpha }^{\ast }(X_{0}^{L})/L\rangle \rightarrow 1$ as $L$ grows                                                                          
by Equation (\ref{Remark15c}).                                                                                                                       
                                                                                                                                                     
The effect of $\alpha $ on the discriminatory power of $Z_{\text{fac},\alpha                                                                         
}(X_{0}^{L})$ clearly depends on the probability distribution $p$ of the $L$%
-patterns through the R\'{e}nyi entropy $R_{\alpha }(p)$. In particular, for                                                                         
$\alpha <1$ the central part of the distribution is flattened, i.e.,                                                                                 
high-probability events are suppressed, and low-probability events are                                                                               
enhanced. This effect is more pronounced for smaller $\alpha $. The opposite                                                                         
happens when $\alpha >1$: low-probability events are suppressed while                                                                                
high-probability events are enhanced. Our choices $\alpha =0$.$5$, $1$, $1$.$%
5$ are meant to include both situations $\alpha <1$ and $\alpha >1$, along                                                                           
with the Shannonian case $\alpha =1$. In Figure \ref{fig:Figura1}, this                                                                              
results in a higher discriminatory power of $Z_{\text{fac},\alpha                                                                                    
}(X_{0}^{L})$ with increasing $\alpha $, i.e., the larger $\alpha $ the                                                                              
further apart the curves $\langle Z_{\text{fac},\alpha }^{\ast                                                                                       
}(X_{0}^{L})/L\rangle $ are, which shows that the parameter $\alpha $ is an                                                                          
asset in applications. Similar choices of $\alpha $ give similar results                                                                             
(not shown).                                                                                                                                         
                                                                                                                                                     
Note also that the curves of different processes may cross. The reason is                                                                            
that $Z_{\text{fac},\alpha }^{\ast }(X_{0}^{L})$ can only capture ranges of                                                                          
interdependence up to $L$. Put another way, larger window sizes\ $L$ unveil                                                                          
dependencies between farther variables that can be measured by $Z_{\text{fac}%
,\alpha }^{\ast }(X_{0}^{L})$. Therefore, as $L$ grows, $Z_{\text{fac}%
,\alpha }^{\ast }(X_{0}^{L})$ can become larger for a noisy chaotic signal,                                                                          
such as the noisy logistic map, than for a process with a longer, or an                                                                              
infinite, span of interdependence between its increments, such as the                                                                                
fractional Brownian motion with $H=0.40$; this can be see more clearly in                                                                            
panel (c).                                                                                                                                           
                                                                                                                                                     
In particular, Figure~\ref{fig:Figura1} shows that, although all seven                                                                               
processes (Fac1)-(Fac7) belong to the same PC class, namely, the factorial                                                                           
class, the finite rates $Z_{\text{fac},\alpha }^{\ast }(X_{0}^{L})/L$ can                                                                            
distinguish them, evidencing as expected that $Z_{\text{fac},\alpha }^{\ast                                                                          
}(X_{0}^{L})$ is a finer measure of permutation complexity than $g(t)$. This                                                                         
does not mean that the growth rate of allowed patterns cannot be utilized                                                                            
for that objective, as we explain in Section \ref{sec52}.                                                                                            
                                                                                                                                                     
\begin{figure*}[tbp]                                                                                                                                 
\begin{center}                                                                                                                                       
\includegraphics[scale=0.45]{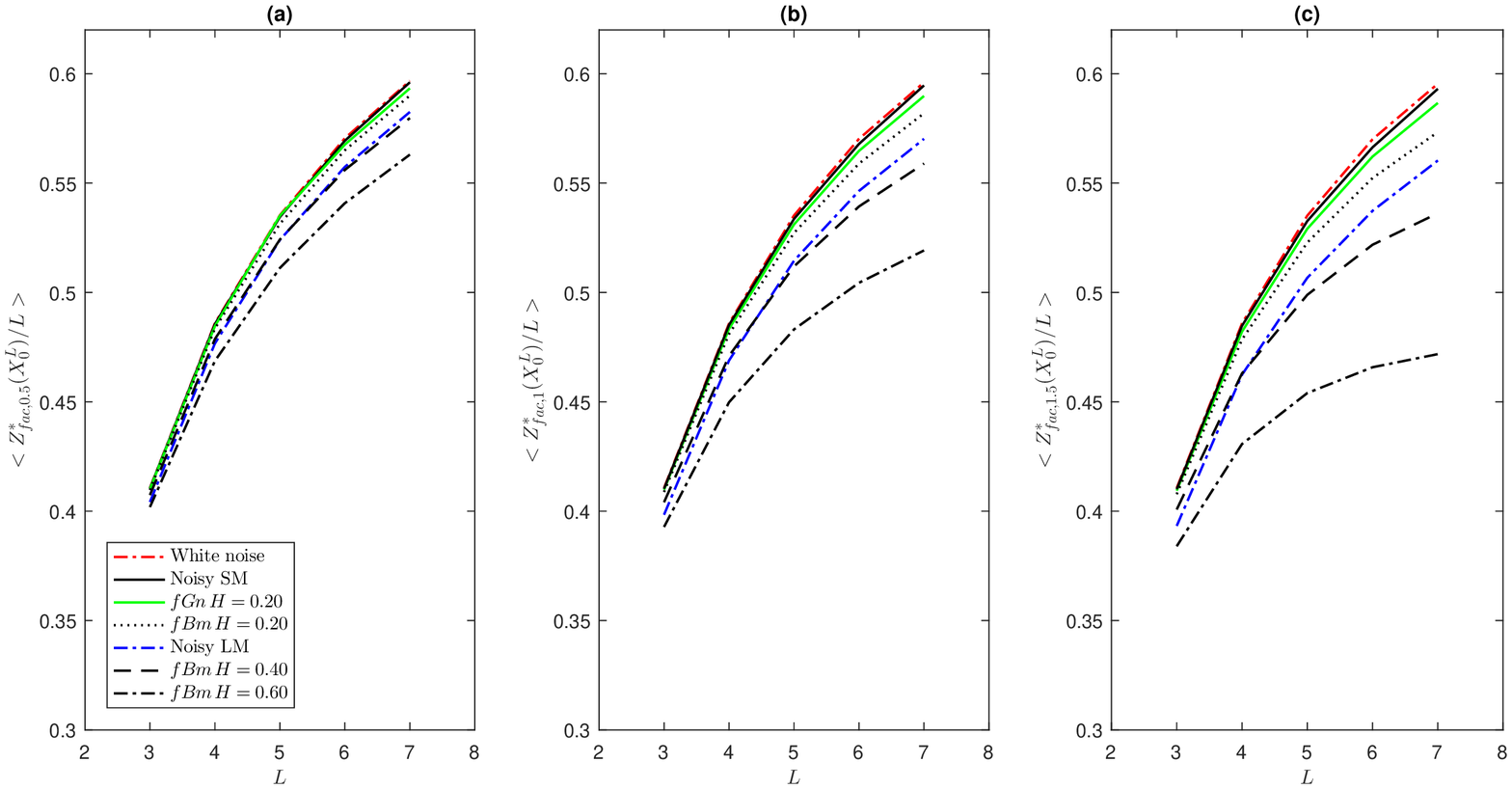}                                                                                                                
\end{center}                                                                                                                                         
\caption{ The averages $\langle Z_{\text{fac},\protect\alpha }^{\ast                                                                                 
}(X_{0}^{L})/L\rangle $ of $Z_{\text{fac},0.5}^{\ast }(X_{0}^{L})/L$ (a), $%
Z_{\text{fac},1}^{\ast }(X_{0}^{L})/L$ (b) and $Z_{\text{fac},1.5}^{\ast                                                                             
}(X_{0}^{L})/L$ (c) over 35 realizations of the random processes listed in                                                                           
the inset are plotted vs $L$ for $3\leq L\leq 7$. }                                                                                                  
\label{fig:Figura1}                                                                                                                                  
\end{figure*}                                                                                                                                        
                                                                                                                                                     
\subsection{Time series discrimination using permutation complexity functions%
}                                                                                                                                                    
                                                                                                                                                     
\label{sec52}                                                                                                                                        
                                                                                                                                                     
It was noticed in \cite{Amigo2007} that the number of missing $L$-patterns                                                                           
for a \textit{white noise} series of length $T\geq L$ decreases                                                                                      
exponentially with $T$. This result was generalized in \cite{Carpi2010} to                                                                           
different sorts of \textit{FPF processes} ($f^{-k}$ power spectrum (PS),                                                                             
fractional Brownian motion (fBm), fractional Gaussian noise (fGn)) by setting%
\begin{equation}                                                                                                                                     
\mathcal{M}_{L,T}(\mathbf{X}):=\left\vert \{\text{missing }L\text{-patterns                                                                          
in }x_{0}^{T-1}\}\right\vert =Ce^{-RT},  \label{decay rate}                                                                                          
\end{equation}%
where $x_{0}^{T-1}$ is a typical realization of $\mathbf{X}$ of length $%
T\geq L$, and $C$ and the \textit{decay rate} $R$ are constants that depend                                                                          
on $L$ and the parameters of the random process ($k$ for $f^{-k}$ PS, or the                                                                         
Hurst exponent $H$ for BM and fGn). From Equation (\ref{decay rate}) and $%
\mathcal{M}_{L,L}(\mathbf{X})=L!-1$, it follows $C=(L!-1)e^{RT}$, hence%
\begin{equation}                                                                                                                                     
\mathcal{M}_{L,T}(\mathbf{X})=(L!-1)e^{-R(T-L)}.  \label{decay rate2}                                                                                
\end{equation}%
Clearly, the decay rate of missing patterns in random time series is not                                                                             
indifferent to the dependencies between the variables of the process, either                                                                         
due to an underlying functional dependence (noisy deterministic signal) or                                                                           
to a statistical correlation.                                                                                                                        
                                                                                                                                                     
This being the case, we let the permutation complexity function $g$ depend                                                                           
on $T$ too, i.e., $g=g(L,T)$, and generalize Equation (\ref{pcf}) to%
\begin{equation}                                                                                                                                     
\ln \mathcal{A}_{L,T}(\mathbf{X}):=\ln \left\vert \{\text{visible }L\text{%
-patterns in }x_{0}^{T-1}\}\right\vert =g(L,T)  \label{A(L,N)}                                                                                       
\end{equation}%
so that $g(L)\sim g(L,T)$, where now $L\gg 1$ implies $T\gg 1$ because $%
T\geq L$. Therefore,%
\begin{equation}                                                                                                                                     
\mathcal{A}_{L,T}(\mathbf{X})+\mathcal{M}_{L,T}(\mathbf{X})=L!  \label{A+M}                                                                          
\end{equation}%
for every $2\leq L\leq T$. Since there are two parameters $L$ and $T$, we                                                                            
can fix $L$ ($=L_{0}$) and vary $T$ so that the allowed patterns have a                                                                              
chance to become visible. Correspondingly, we can distinguish (i) a \textit{%
transient phase} ($T$ \textquotedblleft small\textquotedblright ), where the                                                                         
allowed patterns become progressively visible, and (ii) a \textit{stationary                                                                         
phase} ($T$ \textquotedblleft large\textquotedblright ), where all the                                                                               
allowed patterns are visible. Generally speaking, the transient phase                                                                                
discriminates random signals while the stationary phase discriminates                                                                                
deterministic signals, so both phases complement each other in the analysis                                                                          
of permutation complexity.                                                                                                                           
                                                                                                                                                     
Indeed, in the deterministic case $\mathbf{X}=f$,                                                                                                    
\begin{equation}                                                                                                                                     
g(L_{0},T)=\ln \mathcal{A}_{L_{0},T}(\mathbf{X})\nearrow \ln \mathcal{A}%
_{L_{0}}(\mathbf{X})=h_{0,L_{0}}^{\ast }(f)L_{0}  \label{gamma(L,N)_det}                                                                             
\end{equation}%
as $T$ increases, see Equations (\ref{h*_0}) and (\ref{h*_0,L}). Here, the                                                                           
symbol $\nearrow $ means that the convergence is monotone increasing. The                                                                            
limit is achieved in finite time, namely, once all allowed $L_{0}$-patterns                                                                          
are visible. For FPF processes, though,                                                                                                              
\begin{equation}                                                                                                                                     
\mathcal{A}_{L,T}(\mathbf{X})=L!-\mathcal{M}_{L,T}(\mathbf{X})\simeq                                                                                 
L!\left( 1-e^{-R(T-L)}\right)  \label{decay0}                                                                                                        
\end{equation}%
for every $L$ by Equations (\ref{A+M}) and (\ref{decay rate2}). Instead of                                                                           
the limit (\ref{gamma(L,N)_det}) for $\mathbf{X}=f$, for FPF processes we                                                                            
have                                                                                                                                                 
\begin{equation}                                                                                                                                     
g(L_{0},T)\simeq \ln L_{0}!+\ln \left( 1-e^{-R(T-L_{0})}\right) \nearrow \ln                                                                         
L_{0}!  \label{decay1}                                                                                                                               
\end{equation}%
as $T$ increases and it remains constant once all $L_{0}$-patterns are                                                                               
visible. Therefore, contrarily to the deterministic case (\ref%
{gamma(L,N)_det}), the limit of $g(L_{0},T)$ as $T$ grows is the same for                                                                            
all FPF processes, namely, $\ln L_{0}!$.                                                                                                             
                                                                                                                                                     
\begin{figure*}[tbp]                                                                                                                                 
\begin{center}                                                                                                                                       
\includegraphics[scale=0.45]{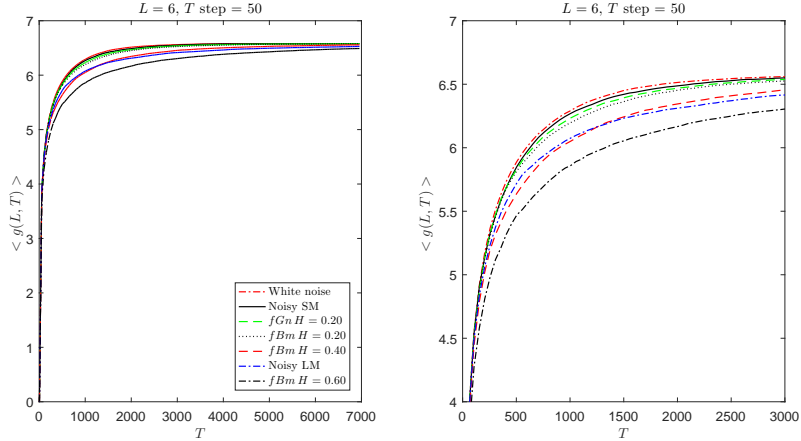}                                                                                                                
\end{center}                                                                                                                                         
\caption{ The averages $\langle g(L,T)\rangle $ of the permutation                                                                                   
complexity function $g(L,T)$ over 35 realizations of the random processes                                                                            
listed in the inset (the same as in Figure~\protect\ref{fig:Figura1}) are                                                                            
plotted vs $T$ (the time series length) for $L=6$. The right panel ($6\leq                                                                           
T\leq 3,000$) is a zoom of the left panel ($6\leq T\leq 7,000$). }                                                                                   
\label{fig:Figura2}                                                                                                                                  
\end{figure*}                                                                                                                                        
                                                                                                                                                     
\medskip                                                                                                                                             
                                                                                                                                                     
A direct application of the \textquotedblleft finite                                                                                                 
length\textquotedblright\ PC function $g(L,T)$ is to discriminate different                                                                          
FPF processes by the different convergence rates of $g(L,T)$ to $\ln L!$ as $%
T$ increases. Numerical evidence is shown in Figure~\ref{fig:Figura2} for $%
L=6$ and $6\leq T\leq 7000$ in the left panel; the right panel is a zoom of                                                                          
the left panel with $6\leq T\leq 3000$. Here we used the same 35 time                                                                                
series, random processes (Fac1)-(Fac7) and numerical results as in Figure~%
\ref{fig:Figura1}, and plotted $\left\langle g(6,T)\right\rangle                                                                                     
=\left\langle \ln \mathcal{A}_{6,T}(\mathbf{X})\right\rangle $ every $\Delta                                                                         
T=50$ points, where $\left\langle \cdot \right\rangle $ denotes again the                                                                            
average over the 35 samples. For all those processes (and any other FPF                                                                              
process for that matter), $\left\langle g(6,T)\right\rangle $ converges to $%
\ln 6!\simeq \allowbreak 6$.$579\,3$ as $T$ grows.                                                                                                   
                                                                                                                                                     
The decay exponents $R$ in the approximation (\ref{decay1}) are listed in                                                                            
Table~\ref{tab:Table1} for $L=4,5,6$, $T=7000$ and the random processes                                                                              
(Fac1)-(Fac7) shown in Figure~\ref{fig:Figura2}. Due to the different                                                                                
correlation lengths, the curves $g(L_{0},T)$ may intersect as a result of                                                                            
the different decay rates of the missing patterns.                                                                                                   
                                                                                                                                                     
\begin{table}[t]                                                                                                                                     
\begin{center}                                                                                                                                       
\begin{tabular}{|c|c|c|c|}                                                                                                                           
\hline                                                                                                                                               
Serie & $L = 4$ & $L = 5$ & $L = 6$ \\ \hline                                                                                                        
White Noise & $4.43 \times 10^{-2}$ & $8.47 \times 10^{-3}$ & $1.40 \times                                                                           
10^{-3}$ \\                                                                                                                                          
$fGn H = 0.20$ & $4.93 \times 10^{-2}$ & $8.17 \times 10^{-3}$ & $1.21                                                                               
\times 10^{-3}$ \\                                                                                                                                   
$fBm H = 0.20$ & $4.20 \times 10^{-2}$ & $7.52 \times 10^{-3}$ & $1.13                                                                               
\times 10^{-3}$ \\                                                                                                                                   
$fBm H = 0.40$ & $3.76 \times 10^{-2}$ & $6.32 \times 10^{-3}$ & $8.12                                                                               
\times 10^{-4}$ \\                                                                                                                                   
$fBm H = 0.60$ & $3.24 \times 10^{-2}$ & $4.07 \times 10^{-3}$ & $5.05                                                                               
\times 10^{-4}$ \\                                                                                                                                   
Noisy LM & $3.36 \times 10^{-2}$ & $5.55 \times 10^{-3}$ & $7.54 \times                                                                              
10^{-4}$ \\                                                                                                                                          
Noisy SM & $4.43 \times 10^{-2}$ & $8.09 \times 10^{-3}$ & $1.30 \times                                                                              
10^{-3}$ \\ \hline                                                                                                                                   
\end{tabular}%
\end{center}                                                                                                                                         
\caption{The decay exponent $R$ in Equations (\protect\ref{decay0})-(\protect                                                                        
\ref{decay1}) for $L=4,5,6$, $T=7,000$, and the random processes plotted in                                                                          
Figure~\protect\ref{fig:Figura2}.}                                                                                                                   
\label{tab:Table1}                                                                                                                                   
\end{table}                                                                                                                                          
                                                                                                                                                     
Let us finally mention that Equation (\ref{decay rate}) has recently been                                                                            
generalized to \cite{Olivares2019A,Olivares2019B}                                                                                                    
\begin{equation}                                                                                                                                     
\mathcal{M}_{L,N}(\mathbf{X})=C\exp \left( -RN^{\beta }\right) .                                                                                     
\label{decay rate stretch}                                                                                                                           
\end{equation}%
The \textit{stretching exponent} $0<\beta \leq 1$ depends on $L$ as well as                                                                          
on the underlying random process; for $\beta =1$ one recovers Equation (\ref{decay                                                                            
rate}).                                                                                                                                              
                                                                                                                                                     
\bigskip                                                                                                                                             
                                                                                                                                                     
\subsection{The sub-factorial class}                                                                                                                 
                                                                                                                                                     
\label{sec53}                                                                                                                                        
                                                                                                                                                     
To wrap up this numerical section, we consider the sub-factorial class as                                                                            
well. To this end, we revisit $\mathbf{X}_{\mathfrak{p}}=(X_{t})_{t\geq 0}$,                                                                         
the \textquotedblleft not-so-noisy\ measurement of a periodic                                                                                        
signal\textquotedblright\ of period $\mathfrak{p}\geq 2$ introduced in                                                                               
Example \ref{ExampleSubfact}, since this is the only instance of a                                                                                   
sub-factorial process that we know of. As compared to the factorial                                                                                  
processes studied in Sections \ref{sec51} and \ref{sec52}, $\mathbf{X}_{%
\mathfrak{p}}$ has two peculiarities: (i) its conceptual simplicity makes                                                                            
most details amenable to analytical scrutiny, and (ii) it is a                                                                                       
cyclostationary process, i.e., its statistical properties vary periodically                                                                          
with time, so that it can be viewed as a random process composed of $%
\mathfrak{p}$ interleaved stationary processes $(X_{\nu \mathfrak{p}+\mu                                                                             
})_{\nu \geq 0}$, $0\leq \mu \leq \mathfrak{p}-1$. Likewise, it turns out                                                                            
that the ordinal representations of $\mathbf{X}_{\mathfrak{p}}$\ decompose                                                                           
into $\mathfrak{p}$ sequences of representations by means of $(\nu \mathfrak{%
p}+\mu )$-patterns, $\nu \in \mathbb{N},$ such that $\mathbf{X}_{\mathfrak{p}%
}$ has a PC function $g(L)$ for each of those sequences.                                                                                             
                                                                                                                                                     
\begin{figure*}[tbp]                                                                                                                                 
\begin{center}                                                                                                                                       
\includegraphics[scale=0.50]{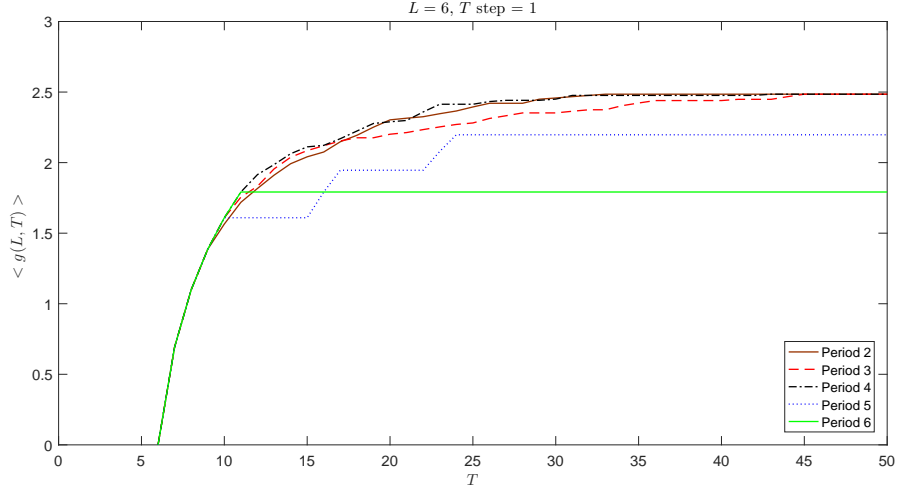}                                                                                                                
\end{center}                                                                                                                                         
\caption{ The averages $\langle g(6,T)\rangle $ of the permutation                                                                                   
complexity function $g(6,T)$ over 35 realizations of the sub-factorial                                                                               
processes $\mathbf{X}_{\mathfrak{p}}$ of periods $2\leq \mathfrak{p}\leq 6$                                                                          
are plotted vs $T$ (the time series length) for $6\leq T\leq 50$. }                                                                                  
\label{fig:Figura3}                                                                                                                                  
\end{figure*}                                                                                                                                        
                                                                                                                                                     
We start our numerical analysis of $\mathbf{X}_{\mathfrak{p}}$ with the                                                                              
finite length PC function $g(L,T)$ for $L=6$. Figure~\ref{fig:Figura3}                                                                               
depicts $g(6,T)$ vs $T$ for $6\leq T\leq 50$ and $2\leq \mathfrak{p}\leq 6$.                                                                         
Here we already recognize that these processes are not factorial because $%
g(6,T)<2.5$ in all cases, whereas $g(6,T)\nearrow \ln 6!\simeq \allowbreak                                                                           
6.\,\allowbreak 58$ for factorial processes (see Section \ref{sec52}). Since                                                                         
they are not deterministic either (which can be simply checked by a return                                                                           
map), then they must be sub-factorial. The curves $g(6,T)$ in Figure \ref%
{fig:Figura3} are visibly distinct, although the ones corresponding to the                                                                           
periods $\mathfrak{p}=2,3,4$ run close to each other. The reason for this is                                                                         
that $g(6,50)=\ln \mathcal{A}_{6}(\mathbf{X}_{\mathfrak{p}})$, where,                                                                                
according to Table \ref{tab:Table2}, $\ln \mathcal{A}_{6}(\mathbf{X}_{%
\mathfrak{p}})=\ln 12\simeq 2.485$ for $\mathfrak{p}=2,3,4$, while $\ln                                                                              
\mathcal{A}_{6}(\mathbf{X}_{5})=\ln 9\simeq 2.197$ and $\ln \mathcal{A}_{6}(%
\mathbf{X}_{6})=\ln 6\simeq 1.792$. Also, contrarily to Figure \ref%
{fig:Figura2} for the factorial processes (Fac1)-(Fac7), the curves $g(6,T)$                                                                         
in Figure~\ref{fig:Figura3} do not follow the exponential ansatz (\ref%
{decay1}) but rather oscillate before leveling off. This is due to the                                                                               
periodicity of the probability distributions of the $L$-patterns (in                                                                                 
particular, of the $6$-patterns) as the window $x_{t},x_{t+1},...,x_{t+5}$                                                                           
slides from $t=1$ to $t=T-5$.                                                                                                                        
                                                                                                                                                     
According to Equation (\ref{Z_sub}), to compute the generalized permutation                                                                          
entropy $Z_{\text{sub},\alpha }^{\ast }(X_{0}^{L})$ of a sub-factorial                                                                               
process $\mathbf{X}=(X_{t})_{t\geq 0}$ with PC function $g(L)=cL\ln L$, $%
0<c<1$, we need to know the constant $c$. With this objective, let $L=\nu                                                                            
\mathfrak{p}+\mu $, where $\nu =\left\lfloor L/\mathfrak{p}\right\rfloor \in                                                                         
\mathbb{N}$ and $\mu =L~\mathrm{{mod}~}\mathfrak{p}\in \{0,1,...,\mathfrak{p}%
-1\}$; $\mu =0$ (i.e., $L=\nu \mathfrak{p}$) is the case considered in                                                                               
Example \ref{ExampleSubfact}. An argument similar to the one used there                                                                              
shows that Equation (\ref{formula}) generalizes to%
\begin{equation}                                                                                                                                     
\mathcal{A}_{L}(\mathbf{X}_{\mathfrak{p}})=N_{1}(\nu ,\mu )+N_{2}(\nu ,\mu )                                                                         
\label{A_L(nu,mu)}                                                                                                                                   
\end{equation}%
where%
\begin{equation}                                                                                                                                     
N_{1}(\nu ,\mu )=(\mathfrak{p}-\mu )[(\nu +1)!]^{\mu }[\nu !]^{\mathfrak{p}%
-\mu -1},\;N_{2}(\nu ,\mu )=\mu \lbrack (\nu +1)!]^{\mu -1}[\nu !]^{%
\mathfrak{p}-\mu }.  \label{formula2}                                                                                                                
\end{equation}%
Note that $N_{2}(\nu ,0)=0$ for all $\nu $, so that we recover Equation (\ref%
{formula}),                                                                                                                                          
\begin{equation}                                                                                                                                     
\mathcal{A}_{\nu \mathfrak{p}}(\mathbf{X}_{\mathfrak{p}})=N_{1}(\nu ,0)=%
\mathfrak{p}(\nu !)^{\mathfrak{p}-1},  \label{formula3}                                                                                              
\end{equation}%
from Equation (\ref{A_L(nu,mu)}). Table~\ref{tab:Table2} summarizes $%
\mathcal{A}_{L}(\mathbf{X}_{\mathfrak{p}})$ for moderate values of $%
\mathfrak{p}$ and $L$.                                                                                                                               
                                                                                                                                                     
\begin{table}[htbp]                                                                                                                                  
\centering                                                                                                                                           
\begin{tabular}{|c|c|c|c|c|r|r|r|r|r|r|r|r|r|}                                                                                                       
\hline                                                                                                                                               
\multicolumn{14}{|c|}{\textbf{ALLOWED PATTERNS}} \\ \hline                                                                                           
\multirow{2}[0]{*}{\textbf{$\mathfrak{p}$}} & \multicolumn{13}{c|}{\textbf{%
Window width ($L$)}} \\ \cline{2-14}                                                                                                                 
& \textbf{2} & \textbf{3} & \textbf{4} & \textbf{5} & \multicolumn{1}{c|}{%
\textbf{6}} & \multicolumn{1}{c|}{\textbf{7}} & \multicolumn{1}{c|}{\textbf{8%
}} & \multicolumn{1}{c|}{\textbf{9}} & \multicolumn{1}{c|}{\textbf{10}} &                                                                            
\multicolumn{1}{c|}{\textbf{11}} & \multicolumn{1}{c|}{\textbf{12}} &                                                                                
\multicolumn{1}{c|}{\textbf{13}} & \multicolumn{1}{c|}{\textbf{14}} \\ \hline                                                                        
\textbf{2} & \multicolumn{1}{r|}{2} & \multicolumn{1}{r|}{3} &                                                                                       
\multicolumn{1}{r|}{4} & \multicolumn{1}{r|}{8} & 12 & 30 & 48 & 144 & 240 &                                                                         
840 & 1,440 & 5,760 & 10,080 \\                                                                                                                      
\textbf{3} & - & \multicolumn{1}{r|}{3} & \multicolumn{1}{r|}{5} &                                                                                   
\multicolumn{1}{r|}{8} & 12 & 28 & 60 & 108 & 324 & 864 & 1,728 & 6,336 &                                                                            
20,160 \\                                                                                                                                            
\textbf{4} & - & - & \multicolumn{1}{r|}{4} & \multicolumn{1}{r|}{7} & 12 &                                                                          
20 & 32 & 80 & 192 & 432 & 864 & 2,808 & 8,640 \\                                                                                                    
\textbf{5} & - & - & - & \multicolumn{1}{r|}{5} & 9 & 16 & 28 & 48 & 80 & 208                                                                        
& 528 & 1,296 & 3,024 \\                                                                                                                             
\textbf{6} & - & - & - & - & 6 & 11 & 20 & 36 & 64 & 112 & 192 & 512 & 1,344                                                                         
\\ \hline                                                                                                                                            
\end{tabular}%
\caption{The number of allowed $L$-patterns $\mathcal{A}_{L}(\mathbf{X}_{                                                                            
\mathfrak{p}})$ for periods $2\leq \mathfrak{p}\leq 6$ and $\mathfrak{p}\leq                                                                         
L\leq 14$.}                                                                                                                                          
\label{tab:Table2}                                                                                                                                   
\end{table}                                                                                                                                          
                                                                                                                                                     
Therefore,                                                                                                                                           
\begin{eqnarray}                                                                                                                                     
\mathcal{A}_{L}(\mathbf{X}_{\mathfrak{p}}) &=&[(\nu +1)!]^{\mu -1}[\nu !]^{%
\mathfrak{p}-\mu -1}\left[ (\mathfrak{p}-\mu )(\nu +1)!+\mu v!\right]  \notag                                                                        
\\                                                                                                                                                   
&=&[(\nu +1)!]^{\mu -1}[\nu !]^{\mathfrak{p}-\mu -1}(\mathfrak{p}-\mu )(\nu                                                                          
+1)!\left[ 1+\tfrac{\mu }{(\mathfrak{p}-\mu )(\nu +1)}\right]                                                                                        
\label{formula4}                                                                                                                                     
\end{eqnarray}%
and%
\begin{equation}                                                                                                                                     
\ln \mathcal{A}_{L}(\mathbf{X}_{\mathfrak{p}})=(\mu -1)\ln (\nu +1)!+(%
\mathfrak{p}-\mu -1)\ln \nu !+\ln (\mathfrak{p}-\mu )+\ln (\nu +1)!+\ln %
\left[ 1+\tfrac{\mu }{(\mathfrak{p}-\mu )(\nu +1)}\right] .                                                                                          
\label{formula5b}                                                                                                                                    
\end{equation}%
Use now Stirling's formula (\ref{Stirling}) to derive%
\begin{eqnarray}                                                                                                                                     
\ln \mathcal{A}_{L}(\mathbf{X}_{\mathfrak{p}}) &\sim &(\mu -1)(\nu +1)\ln                                                                            
(\nu +1)+(\mathfrak{p}-\mu -1)\nu \ln \nu +(\nu +1)\ln (\nu +1)  \notag \\                                                                           
&=&\mu (\nu +1)\ln (\nu +1)+(\mathfrak{p}-\mu -1)\nu \ln \nu                                                                                         
\label{formula6b}                                                                                                                                    
\end{eqnarray}%
when $\nu \rightarrow \infty $ ($\mu $ is bounded by $\mathfrak{p}-1$).                                                                              
Equation (\ref{formula6b}) leads to the following two cases.                                                                                         
                                                                                                                                                     
\begin{enumerate}                                                                                                                                    
\item If $\mu =0$, i.e., $L=\nu \mathfrak{p}$, then%
\begin{equation}                                                                                                                                     
\ln \mathcal{A}_{\nu \mathfrak{p}}(\mathbf{X}_{\mathfrak{p}})\sim (\mathfrak{%
p}-1)\nu \ln \nu =(\mathfrak{p}-1)\tfrac{L}{\mathfrak{p}}\ln \tfrac{L}{%
\mathfrak{p}}\sim \tfrac{\mathfrak{p}-1}{\mathfrak{p}}L\ln L.                                                                                        
\label{formula7}                                                                                                                                     
\end{equation}%
This means that%
\begin{equation}                                                                                                                                     
g(L=\nu \mathfrak{p})=cL\ln L\text{\ with\ }c=\tfrac{\mathfrak{p}-1}{%
\mathfrak{p}},  \label{formula8}                                                                                                                     
\end{equation}%
in accordance with Example \ref{ExampleSubfact}. Numerical simulations                                                                               
confirm that, as expected, the $N_{1}(\nu ,\mu )$ allowed $L$-patterns for $%
\mathbf{X}_{\mathfrak{p}}$ are equiprobable, hence%
\begin{equation}                                                                                                                                     
R_{\alpha }(p_{u})=R_{0}(p_{u})=\ln N_{1}(L/\mathfrak{p},0)=(\mathfrak{p}%
-1)\ln [\mathfrak{p}(L/\mathfrak{p})!]  \label{formula9}                                                                                             
\end{equation}%
for all $\alpha >0$, see Equation (\ref{formula3}).                                                                                                  
                                                                                                                                                     
\item If $1\leq \mu \leq \mathfrak{p}-1$, i.e., $L=\nu \mathfrak{p}+\mu $                                                                            
with $\mu \neq 0$, then%
\begin{eqnarray}                                                                                                                                     
\ln \mathcal{A}_{\nu \mathfrak{p}+\mu }(\mathbf{X}_{\mathfrak{p}}) &\sim                                                                             
&\mu (\nu +1)\ln (\nu +1)=\mu \left( \tfrac{L-\mu }{\mathfrak{p}}+1\right)                                                                           
\ln \left( \tfrac{L-\mu }{\mathfrak{p}}+1\right)  \notag \\                                                                                          
&\sim &\mu \tfrac{L}{\mathfrak{p}}\ln \tfrac{L}{\mathfrak{p}}\sim \tfrac{\mu                                                                         
}{\mathfrak{p}}L\ln L.  \label{formula10}                                                                                                            
\end{eqnarray}%
This means that                                                                                                                                      
\begin{equation}                                                                                                                                     
g(L=\nu \mathfrak{p}+\mu )=cL\ln L\text{\ with\ }c=\tfrac{\mu }{\mathfrak{p}}%
\in \left\{ \tfrac{1}{\mathfrak{p}},\tfrac{2}{\mathfrak{p}},...,\tfrac{%
\mathfrak{p}-1}{\mathfrak{p}}\right\} \text{.}  \label{formula11}                                                                                    
\end{equation}%
Numerical simulations and theoretical insight show that, in this case, the                                                                           
probability distribution of the allowed $L$-patterns for $\mathbf{X}_{%
\mathfrak{p}}$ is composed of $N_{1}(\nu ,\mu )$ $L$-patterns of probability                                                                         
$P_{1}=(\mathfrak{p}-\mu )/\mathfrak{p}N_{1}$, and $N_{2}(\nu ,\mu )$ $L$%
-patterns of probability $P_{2}=\mu /\mathfrak{p}N_{2}$. It follows,%
\begin{equation}                                                                                                                                     
R_{1}(p)=-N_{1}P_{1}\ln P_{1}-N_{2}P_{2}\ln P_{2},\;\;\text{and\ \ }%
R_{\alpha }(p)=\tfrac{1}{1-\alpha }\ln \left( N_{1}P_{1}^{\alpha                                                                                     
}+N_{2}P_{2}^{\alpha }\right)  \label{formula13}                                                                                                     
\end{equation}%
for $\alpha >0$, $\alpha \neq 1$, where $p$ is the probability distribution                                                                          
of the allowed ordinal $L$-patterns for $\mathbf{X}_{\mathfrak{p}}$.                                                                                 
\end{enumerate}                                                                                                                                      
                                                                                                                                                     
From Equations (\ref{formula8}) and (\ref{formula11}) we obtain%
\begin{equation}                                                                                                                                     
c=\tfrac{\mathfrak{p}-1}{\mathfrak{p}}\text{\ \ for\ \ }L=0,\mathfrak{p}-1~%
\mathrm{{mod}~}\mathfrak{p}.  \label{formula12}                                                                                                      
\end{equation}%
In particular, $c=1/2$ for $\mathfrak{p}=2$ and $\mu =0,1$, i.e., the                                                                                
process $\mathbf{X}_{2}$ has a unique PC function $g(L)=(1/2)L\ln L$,                                                                                
regardless of whether we use ordinal patterns of even or odd lengths.                                                                                
                                                                                                                                                     
\begin{figure*}[tbp]                                                                                                                                 
\begin{center}                                                                                                                                       
\includegraphics[scale=0.45]{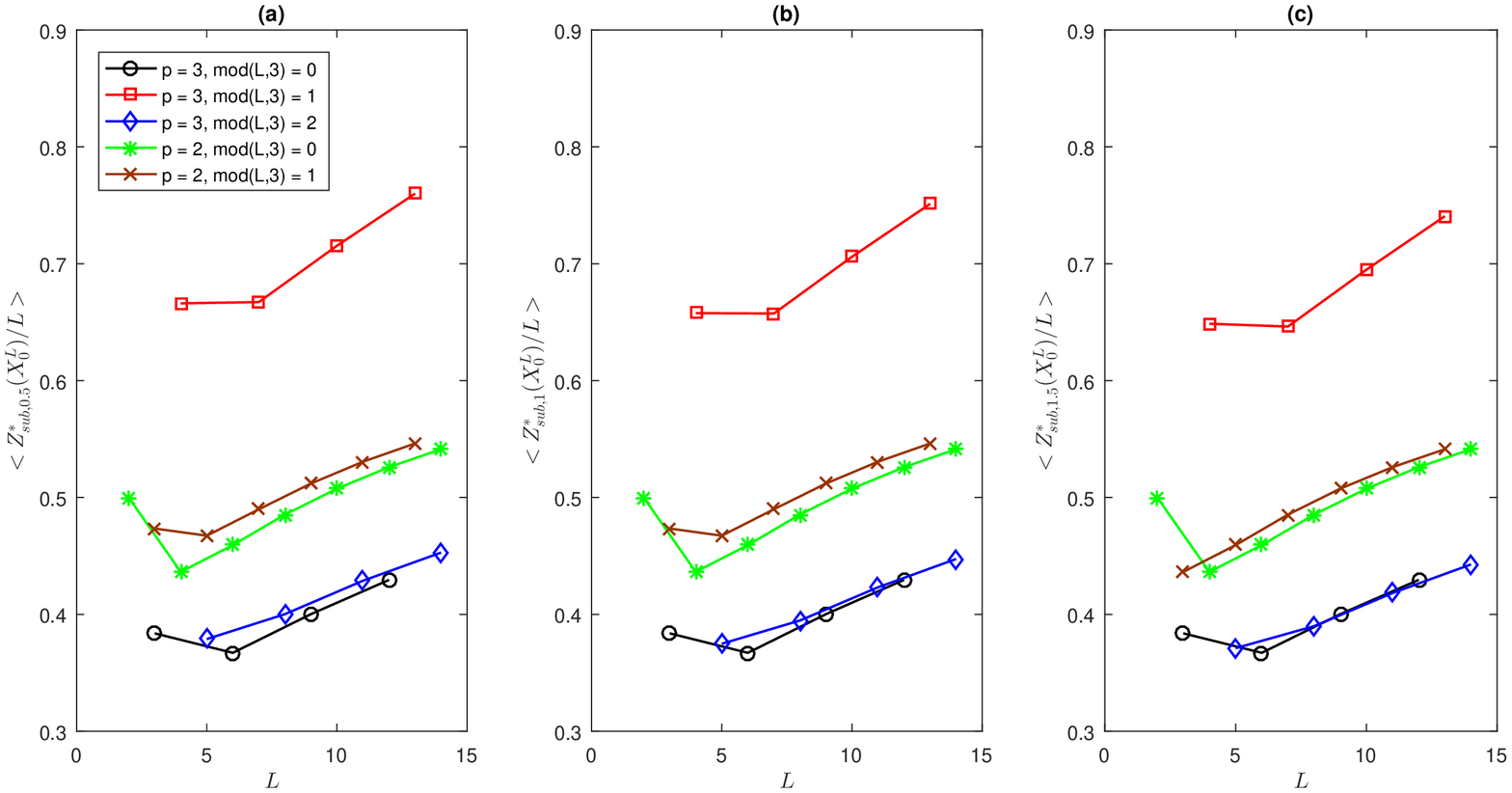}                                                                                                                
\end{center}                                                                                                                                         
\caption{ The averages $\langle Z_{\text{sub},\protect\alpha }^{\ast                                                                                 
}(X_{0}^{L})/L\rangle $ of $Z_{\text{sub},0.5}^{\ast }(X_{0}^{L})/L$ (a), $%
Z_{\text{sub},1}^{\ast }(X_{0}^{L})/L$ (b) and $Z_{\text{sub},1}^{\ast                                                                               
}(X_{0}^{L})/L$ (c) over 35 realizations of the sub-factorial processes $%
\mathbf{X}_{2,0}$, $\mathbf{X}_{2,1}$, $\mathbf{X}_{3,0}$, $\mathbf{X}_{3,1}$                                                                        
and $\mathbf{X}_{3,2}$ are plotted vs $L$ for $2\leq L\leq 14$. By                                                                                   
definition, each process $\mathbf{X}_{\mathfrak{p},\protect\mu }$, $0\leq                                                                            
\protect\mu \leq \mathfrak{p}-1$, uses ordinal representations $\mathcal{S}%
_{L}$ of lengths $L=\protect\nu \mathfrak{p}+\protect\mu $, $\protect\nu \in                                                                         
\mathbb{N}$, to calculate the corresponding complexity function $g(L)=cL\ln                                                                          
L $. Here, $c=1/2$ for $\mathbf{X}_{2,0}$ and $\mathbf{X}_{2,1}$, $c=2/3$                                                                            
for $\mathbf{X}_{3,0}$ and $\mathbf{X}_{3,2}$, and $c=1/3$ for $\mathbf{X}%
_{3,1}$. }                                                                                                                                           
\label{fig:Figura4}                                                                                                                                  
\end{figure*}                                                                                                                                        
                                                                                                                                                     
In conclusion, $\mathbf{X}_{\mathfrak{p}}$ generates $\mathfrak{p}$                                                                                  
sub-factorial processes $\mathbf{X}_{\mathfrak{p},\mu }$, $0\leq \mu \leq                                                                            
\mathfrak{p}-1$, according to the ordinal representations used to discretize                                                                         
the realizations of $\mathbf{X}_{\mathfrak{p}}$. Thus, if we use ordinal                                                                             
patterns of lengths $L=\nu \mathfrak{p}$, $\nu \in \mathbb{N}$, the result                                                                           
is a sub-factorial process $\mathbf{X}_{\mathfrak{p},0}$ with%
\begin{equation}                                                                                                                                     
\ln \mathcal{A}_{L=\nu \mathfrak{p}}(\mathbf{X}_{\mathfrak{p},0})\sim \tfrac{%
\mathfrak{p}-1}{\mathfrak{p}}L\ln L.  \label{formula14}                                                                                              
\end{equation}%
Otherwise, if we use ordinal patterns of lengths $L=\nu \mathfrak{p}+\mu $,                                                                          
with $\nu \in \mathbb{N}$ and $1\leq \mu \leq \mathfrak{p}-1$ fixed, the                                                                             
result is a sub-factorial process $\mathbf{X}_{\mathfrak{p},\mu }$ with%
\begin{equation}                                                                                                                                     
\ln \mathcal{A}_{L=\nu \mathfrak{p}+\mu }(\mathbf{X}_{\mathfrak{p},\mu                                                                               
})\sim \tfrac{\mu }{\mathfrak{p}}L\ln L.  \label{formula15}                                                                                          
\end{equation}%
More formally, we say that the cyclostationary process $\mathbf{X}_{%
\mathfrak{p}}$ generates the processes $(\mathbf{X}_{\mathfrak{p}},\{%
\mathcal{S}_{\nu \mathfrak{p}+\mu }\}_{\nu \geq 1})=:\mathbf{X}_{\mathfrak{p}%
,\mu }$, $0\leq \mu \leq \mathfrak{p}-1$, where $\{\mathcal{S}_{\nu                                                                                  
\mathfrak{p}+\mu }\}_{\nu \geq 1}$ is the subsequence of ordinal                                                                                     
representations used to compute the PC function $g(L)$. In our case, the                                                                             
choices $\{\mathcal{S}_{\nu \mathfrak{p}+\mu }\}_{\nu \geq 1}$ with $\mu $                                                                           
fixed are justified because then the corresponding PC function exists.                                                                               
                                                                                                                                                     
Figure \ref{fig:Figura4}\ shows the average $\langle Z_{\text{sub},\alpha                                                                            
}^{\ast }(X_{0}^{L})/L\rangle $ of the permutation entropy $Z_{\text{sub}%
,\alpha }^{\ast }(X_{0}^{L})/L=(e^{\mathcal{L}[R_{\alpha }(p)/c]}-1)/L$ over                                                                         
35 realizations of the sub-factorial processes $\mathbf{X}_{2,\mu }$, $%
\mathbf{X}_{3,\mu }$ and $2\leq L\leq 14$, where $\alpha =0.5$ (a), $\alpha                                                                          
=1$ (b) and $\alpha =1.5$ (c). According to Equations (\ref{formula8}) and (%
\ref{formula11}), $c=1/2$ for $\mathbf{X}_{2,0}$ and $\mathbf{X}_{2,1}$, $%
c=2/3$ for $\mathbf{X}_{3,0}$ and $\mathbf{X}_{3,2}$, while $c=1/3$ for $%
\mathbf{X}_{3,1}$. At variance with Figure \ref{fig:Figura1}, the curves in                                                                          
Figure \ref{fig:Figura4} are further apart the lower $\alpha $ is. In                                                                                
particular, we see overlaps of the processes $\mathbf{X}_{3,0}$ and $\mathbf{%
X}_{3,2}$ in panels (b) and (c) that, however, are resolved in panel (a).                                                                            
This again illustrates how the parameter $\alpha $ can help when it comes to                                                                         
applications.

\section{Conclusion}

\label{sec6}

Permutation entropy is a popular tool for characterizing time series that
depends on the size of the sliding window used to define the permutations
its name refers to. If the data has been output by a dynamical system, the
conventional permutation entropy rate (\ref{h*_mu}) converges with
increasing window sizes to the Kolmogorov-Sinai entropy of the dynamics
(Theorem \ref{Thm1}(a)). But if the process is noisy deterministic or
random, then that entropy rate diverges in general because the number of
visible ordinal patterns (permutations) can grow super-exponentially, see
Equation (\ref{allowed pat X}). This different growth behavior of the
ordinal patterns and, hence, of the permutation complexity makes possible to
distinguish (noiseless) deterministic processes from noisy processes but
poses a challenge for a unified formulation of permutation complexity and
its measurement across the entire range of processes.

In view of this fact, the objective of this paper was to propose such a
unified formulation. Our approach consisted of two steps. First, we
introduced the permutation complexity (PC) function $g(t)$ in Section \ref%
{sec3}. The asymptotic behavior of $g(t)$ defines the exponential,
sub-factorial and factorial PC classes, the latter being the most
interesting in the application of the ordinal methodology to real-world time
series. A \textquotedblleft finite-length\textquotedblright\ version of the
PC function, $g(L,T)$, where $T$ is the length of a time series and $L\ll T$
is the length of the ordinal patterns, was used in the numerical
simulations, Section \ref{sec52}, to discriminate random processes via its
convergence rate to $\ln L!$ with satisfactory results (see Figure~\ref%
{fig:Figura2}).

Second, we borrowed the concept of $Z$-entropy from statistical mechanics
and complexity theory to generalize permutation entropy from the exponential
PC\ class (the realm of conventional permutation entropy) to the
sub-factorial and factorial PC classes. For this reason, the generalized
permutation entropies go by the name $Z_{g,\alpha }^{\ast }$ (Equation (\ref%
{Zg_entropy})), where the PC function $g(t)$ defines the corresponding
class. $Z$-entropies are group entropies \cite{TJ2020,PT2020} that are
designed to be extensive on the various complexity classes, including
classes not considered here such as the sub-exponential and the
super-factorial. Precisely, the extensivity of the $Z$-entropy entails in
our context that the rate of the generalized permutation entropy, $%
z_{g,\alpha }^{\ast }$ (Equation (\ref{Z*})), converges for the processes in
the sub-factorial class and, foremost, in the factorial class. The
discriminatory power of $Z_{g,\alpha }^{\ast }$ was numerically tested in
Section \ref{sec51} with seven processes belonging to the factorial class,
i.e., $g(t)=t\ln t$. The results, shown in Figure~\ref{fig:Figura1}, were
satisfactory as well. For completion we also analyzed in Section \ref{sec53}
the particularities of a toy model for sub-factorial processes introduced in
Example \ref{ExampleSubfact}. This model is of limited practical interest
but it has the virtue of revealing some subtleties such as the role of
cyclostationarity.

In conclusion,\ we have presented in this paper an integrating approach to
the study and characterization of real-valued processes, whether
deterministic or random, in the ordinal representation. Regarding the
methodology of this approach, the processes are sorted into the exponential,
sub-factorial and factorial complexity classes. Regarding the tools, the
permutation complexity of the processes in each class is measured by the
corresponding permutation entropy, which is the $Z$-entropy of that class.
The result is a \textquotedblleft class-wise\textquotedblright\
generalization of conventional permutation entropy, one per class, whose
rate also converges in the sub-factorial and factorial classes. These
entropic measures of permutation complexity have both fine theoretical
properties and potential in practical applications, in addition to closing
the conceptual gap between deterministic and noisy signals in the ordinal
analysis of time series.

\bigskip

\section{Acknowledgements}

J.M.A. and R.D. were
financially supported by Agencia Estatal de Investigaci\'{o}n, Spain, grant
PID2019-108654GB-I00. J.M.A. was also supported by Generalitat Valenciana,
Spain, grant PROMETEO/2021/063. The research of P.T. has been supported by
the research project PGC2018-094898-B-I00, Ministerio de Ciencia, Innovaci%
\'{o}n y Universidades, Spain, and by the Severo Ochoa Programme for Centres
of Excellence in R\&D (CEX2019-000904-S), Ministerio de Ciencia, Innovaci%
\'{o}n y Universidades, Spain. P.T. is a member of the Gruppo Nazionale di
Fisica Matematica (INDAM), Italy.

\bibliographystyle{unsrt}

\bibliography{Complexity56Arxiv_v3}

%
%
%
%

\end{document}